\newif\ifreport\reporttrue
\theoremstyle{definition}
\newtheorem{definition}{Definition}
\newcommand{\age}{\Delta}
\def\blue{\color{black}}
\def\red{\color{red}}
\newcommand{\ignore}[1]{}
\newtheorem{lemma}{Lemma}
\newtheorem{theorem}{Theorem}
\begin{document}
\IEEEoverridecommandlockouts
\title{Minimizing the Age of the Information to Multiple Sources}
\title{\huge	 Age-Optimal  Updates of Multiple Information Flows}

\author{Yin Sun$^\dag$, Elif Uysal-Biyikoglu$^\ddag$, and Sastry Kompella$^*$ \\
$^\dag$Dept. of ECE, Auburn University, Auburn, AL\\
$^\ddag$Dept. of EEE, Middle East Technical University, Ankara, Turkey\\
$^*$Information Technology Division, Naval Research Laboratory, Washington, DC
\thanks{Y. Sun's work is supported in part by ONR grant N00014-17-1-2417. S. Kompella's work is supported in part by ONR.}
}

\maketitle
\thispagestyle{plain}
\pagestyle{plain}
\begin{abstract}
In this paper, we study an age of information minimization problem, where multiple flows of update packets are sent over multiple servers to their destinations. Two online scheduling policies are proposed. When  the packet generation and arrival times  are synchronized across the flows, the proposed policies are shown to be (near) optimal for minimizing any \emph{time-dependent}, \emph{symmetric}, and \emph{non-decreasing} penalty function of the ages of the flows over time in a stochastic ordering sense. 
\end{abstract}

\section{Introduction}

%

In many information-update and networked control systems, such as news updates, stock trading,  autonomous driving, and robotics control, information has the greatest value when it is fresh. A metric on information freshness, called the \emph{age of information} or simply the \emph{age}, was defined in \cite{Song1990,KaulYatesGruteser-Infocom2012}. Consider a flow of update packets that are sent from a source to a destination through a queue. 
Let $U(t)$ be the time stamp (i.e., generation time) of the {newest update that the destination has received} by time $t$. {The age of information, as a function of time $t$, is defined as} 
$\Delta (t) = t - U(t)$, which is the time elapsed since the newest  update was generated.

In recent years, there have been a lot of research efforts on the behavior of  $\Delta (t)$ under various queueing service disciplines and how to control $\Delta (t)$ to keep the information as fresh as possible
\cite{KaulYatesGruteser-Infocom2012,Suninfocom2016,AgeOfInfo2016,Bedewy2016,BedewyJournal2017,Bedewy2017,BedewyMultihop2017,2012ISIT-YatesKaul,Yates2016, 2015ISITHuangModiano,IgorAllerton2016,HsuTWC2017,He2018,YatesISIT2017}. When there is  a single flow of update packets, a Last Generated
First Served (LGFS) update transmission policy, in which the last generated packet is served the first, has been shown to be (nearly) optimal for minimizing the age process $\{\age(t),t\geq 0\}$ in a stochastic ordering sense for multi-server and multi-hop networks \cite{Bedewy2016,BedewyJournal2017,Bedewy2017,BedewyMultihop2017}. This result holds for arbitrary packet generation times at the source and arbitrary packet arrival times at the transmitter queue (see Fig. \ref{fig_model}); it also holds for minimizing any non-decreasing functional $p(\{\age(t),t\geq 0\})$ of the age process. These studies motivated us to explore service and scheduling policies for achieving age optimality in more general systems with \emph{multiple flows of update packets}. In this case, the transmission scheduler needs to compare not only the packets from the same flow, but also the packets from different flows, which makes the scheduling problem more challenging. 

In this paper, we study age-optimal online scheduling in multi-flow, multi-server queueing systems (as illustrated in Figure \ref{fig_model}), where each server can be used to send update packets to any destination, one packet at a time. {\blue We assume that the packet generation and arrival times are {synchronized} across  the flows. This assumption is a generalized version of  the model in \cite{IgorAllerton2016}.
In practice, synchronized update generations and arrivals occur  when there is a single source and multiple destinations (e.g.,  \cite{IgorAllerton2016}), or in periodic sampling where multiple sources are synchronized by the same clock as in many monitoring and control applications\ifreport
(e.g.,  \cite{Phadke1994,Sivrikaya2004})
\fi.}
The contributions of this paper are summarized as follows:
\begin{itemize}
\item Let $\bm{\Delta}(t)$ denote the age vector of multiple flows. We introduce an age penalty function $p_t (\bm\age(t))$ to represent the level of dissatisfaction for having aged information at the destinations at time $t$, where $p_t$ can be any \emph{time-dependent}, \emph{symmetric}, and \emph{non-decreasing} function of the age vector $\bm{\Delta}(t)$. 

\item For single-server systems with \emph{i.i.d.} exponential service times, we propose a \emph{Maximum Age First, Last Generated First Served (MAF-LGFS) policy}. 
If the packet generation and arrival times  are synchronized across the flows, then for all age penalty functions $p_t$ defined above, the preemptive MAF-LGFS policy is proven to minimize the age penalty process $\{p_t (\bm\age(t)), t\geq 0\}$ among all causal policies in a stochastic ordering sense (Theorem \ref{thm1}).   

\item For multi-server systems with \emph{i.i.d.} New-Better-than-Used (NBU) service times (which include exponential service times as a special case),
we consider an age lower bound called the \emph{Age of Served Information} and propose a \emph{Maximum Age of Served Information First, Last Generated First Served (MASIF-LGFS) policy}. For synchronized packet generations and arrivals, the non-preemptive MASIF-LGFS policy is shown to be within an  additive gap from the optimum for minimizing the long-run average age of the flows, where the gap is equal to the mean service time of one packet (Theorems \ref{thm3}-\ref{thm4}). Numerical evaluations are provided to verify our (near) age optimality results. {\blue Some possible extensions are discussed at the end of the paper.}

\end{itemize}
\begin{figure}
\centering 
\includegraphics[width=0.4\textwidth]{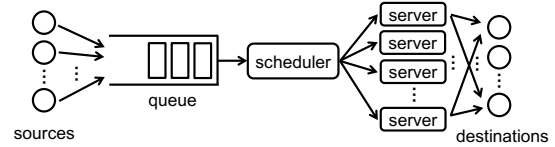} 
\vspace{-0mm}
\caption{System model.}
\label{fig_model} 
\vspace{-5mm}
\end{figure} 
\ifreport
Our results can be potentially applied to: (i) cloud-hosted Web services where the servers in Figure \ref{fig_model} represent a pool of threads (each for a TCP connection) connecting a front-end proxy node to clients \cite{Fox:1997:CSN:269005.266662}, (ii) industrial robotics and factory automation systems where multiple sensor-output flows are sent to a wireless AP and then forwarded to a system monitor and/or controller \cite{Gungor2009}, and (iii) Multi-access Edge Computing (MEC) that can process fresh data (e.g., data for video analytics, location services, and IoT) locally at the very edge of the mobile network \cite{MEC}. 
\fi


\section{Related Work}\label{sec_related_work}
The age performance of multiple sources has been analyzed in \cite{2012ISIT-YatesKaul,Yates2016,2015ISITHuangModiano}. In \cite{YatesISIT2017}, status updates over a multiaccess channel was studied. In \cite{He2018}, an age minimization problem for single-hop wireless networks with interference constraints was shown to be NP hard, and tractable cases were identified.  In \cite{IgorAllerton2016}, the expected time-average of the weighted sum age of multiple sources was minimized in a broadcast network with an ON-OFF channel and periodic arrivals, where only one source is scheduled at a time and the scheduler does not know the current ON-OFF channel state. 
When the network is symmetric and the weights are equal, a sample-path method was used to show that the maximum age first (MAF)  policy is optimal. Further, a sub-optimal Whittle's index method was used to handle the general asymmetric cases. In \cite{HsuTWC2017}, for symmetric Bernoulli arrivals and an always-ON channel with no buffers, the MAF policy was shown to be optimal for minimizing  the expected time-average of the sum age of multiple sources. In addition, Markov decision process (MDP) methods were used to handle the general scenarios with asymmetric arrivals and a buffer, where the optimal policies are shown to be switch-type. 

{\blue Compared with these prior studies, Theorem \ref{thm1} in this paper may be seen as an extension of the optimal scheduling results in \cite{IgorAllerton2016,HsuTWC2017} to general time-dependent, symmetric, and non-decreasing age penalty functions $p_t$. In Theorems \ref{thm3}-\ref{thm4}, we go one step forward to study multi-flow, multi-server scheduling, which was not considered in \cite{IgorAllerton2016,HsuTWC2017}. This paper  also complements the studies in
\cite{Bedewy2016,BedewyJournal2017,Bedewy2017,BedewyMultihop2017} on (near) age-optimal online scheduling with a single information flow. }

\section{System Model}\label{sec:model}
\subsection{Notation and Definitions}



We  use lower case letters such as $x$ and $\bm{x}$, respectively, to represent deterministic scalars and vectors. In the vector case, a subscript will index the components of a vector, such as $x_i$.
We use $x_{[i]}$ 
to denote the $i$-th largest 
component of vector $\bm{x}$. Let $\bm 0$ denote   the vector with all 0 components.
A function $f: \mathbb{R}^n\rightarrow \mathbb{R}$ is termed \emph{symmetric} if $f(\bm{x})= f(x_{[1]},\ldots, x_{[n]})$ for all $\bm{x}$. A function $f: \mathbb{R}^n\rightarrow \mathbb{R}$ is termed \emph{separable} if there exists functions $f_1,\ldots,f_n$ of one variable such that $f(\bm{x}) = \sum_{i=1}^n f_i(x_i)$. 
The composition of functions $f$ and $g$ is denoted by $f \circ g(\bm x) = f(g (\bm x))$. 
 For any $n$-dimensional vectors $\bm{x}$ and $\bm{y}$, the elementwise vector ordering $x_i\leq y_i$, $i=1,\ldots,n$, is denoted by $\bm{x} \leq \bm{y}$. 
Let $\mathcal{A}$ 
and $\mathcal{U}$ 
denote sets and events. 
For all random variable ${X}$ and event $\mathcal{A}$, let $[{X}|\mathcal{A}]$ denote a random variable with the conditional distribution of ${X}$ for given $\mathcal{A}$.

%

\begin{definition}\label{def_variable}
\emph{Stochastic Ordering of Random Variables \cite{StochasticOrderBook}}: 
A random variable ${X}$ is said to be \emph{stochastically smaller} than another random variable ${Y}$, denoted by ${X}\leq_{\text{st}}{Y}$, if 
\begin{align}
\Pr({X}>t) \leq \Pr({Y}>t),~\forall~t\in \mathbb{R}. \nonumber
\end{align}
\end{definition}
\begin{definition}\label{def_vector}
\emph{Stochastic Ordering of Random Vectors \cite{StochasticOrderBook}}: 
A set $\mathcal{U} \subseteq \mathbb{R}^n$ is called \emph{upper}, if $\bm{y} \in \mathcal{U}$ whenever $\bm{y}\geq \bm{x}$ and $\bm{x} \in \mathcal{U}$. 
Let $\bm{X}$ and $\bm{Y}$ be two $n$-dimensional random vectors, $\bm{X}$ is said to be \emph{stochastically smaller} than $\bm{Y}$, denoted by $\bm{X}\leq_{\text{st}}\bm{Y}$, if 
\begin{align}
\Pr(\bm{X}\in \mathcal{U}) \leq \Pr(\bm{Y}\in \mathcal{U}),~\forall~\mathcal{U}\subseteq \mathbb{R}^n.\nonumber
\end{align}
\end{definition}
\begin{definition}\label{def_process}
\emph{Stochastic Ordering of Stochastic Processes \cite{StochasticOrderBook}}: 
Let $\{X(t),t\in [0,\infty) \}$ and $\{Y(t),t\in [0,\infty) \}$ be two stochastic processes, $\{X(t), t\in [0,\infty) \}$ is said to be \emph{stochastically smaller} than $\{Y(t),t\in [0,\infty) \}$, denoted by $\{X(t),t\in [0,\infty) \}\leq_{\text{st}}\{Y(t),t\in [0,\infty)\}$, if for all integer $n$ and $0\leq t_1< t_2<\ldots<t_n$, it holds that 
\begin{align}
\!\!\!(X(t_1),X(t_2),\ldots,X(t_n)) \!\leq_{\text{st}}\! (Y(t_1),Y(t_2),\ldots,Y(t_n)).\!\!\!\nonumber
\end{align}
\end{definition}

\ignore{A random variable ${X}$ is said to be \emph{stochastically smaller} than another random variable ${Y}$, denoted by ${X}\leq_{\text{st}}{Y}$, if $\Pr({X}>x) \leq \Pr({Y}>x)$ for all~$x\in \mathbb{R}$.
A set $\mathcal{U} \subseteq \mathbb{R}^n$ is called \emph{upper}, if $\bm{y} \in \mathcal{U}$ whenever $\bm{y}\geq \bm{x}$ and $\bm{x} \in \mathcal{U}$. 
A random vector $\bm{X}$ is said to be \emph{stochastically smaller} than another random vector $\bm{Y}$, denoted by $\bm{X}\leq_{\text{st}}\bm{Y}$, if $\Pr(\bm{X}\in \mathcal{U}) \leq \Pr(\bm{Y}\in \mathcal{U})$ for all upper sets ~$\mathcal{U}\subseteq \mathbb{R}^n$. 
A stochastic process $\{X(t), t\in [0,\infty) \}$ is said to be \emph{stochastically smaller} than another stochastic process $\{Y(t),t\in [0,\infty) \}$, denoted by $\{X(t),t\in [0,\infty) \}\leq_{\text{st}}\{Y(t),t\in [0,\infty)\}$, if for all integer $n$ and $0\leq t_1< t_2<\ldots<t_n$, it holds that $(X(t_1),X(t_2),\ldots,X(t_n)) \!\leq_{\text{st}}\! (Y(t_1),Y(t_2),\ldots,Y(t_n))$.}

\ignore{
Let us use $\mathcal{U}$ 
to denote sets. A set $\mathcal{U} \subseteq \mathbb{R}^n$ is called \emph{upper}, if $\bm{y} \in \mathcal{U}$ whenever $\bm{y}\geq \bm{x}$ and $\bm{x} \in \mathcal{U}$. 

}

Let $\mathbb{V}$ be the set of Lebesgue measurable functions on $[0,\infty)$, i.e.,
\begin{align}\label{eq_functions}
\mathbb{V} = \{f : [0,\infty) \mapsto \mathbb{R} \text{ is Lebesgue measurable}\}.
\end{align}
A functional $\phi:\mathbb{V}\mapsto\mathbb{R}$ is said to be  \emph{non-decreasing} if  $\phi(f_1) \leq \phi(f_2)$ {for all $f_1,f_2\in\mathbb{V}$ satisfying} $f_1(t)\leq  f_2(t)$ for $t\in [0,\infty)$.
We remark that $\{X(t),t\in [0,\infty) \}\leq_{\text{st}}\{Y(t),t\in [0,\infty)\}$  if, and only if,  \cite{StochasticOrderBook}
\begin{equation}\label{eq_order}
\mathbb{E}[\phi(\{X(t), t\in [0,\infty)\} )] \leq \mathbb{E}[\phi(\{Y(t), t\in [0,\infty)\} )]
\end{equation}
holds for all non-decreasing functional $\phi: \mathbb{V}\rightarrow \mathbb{R}$, provided that the expectations in \eqref{eq_order} exist.

\subsection{Queueing System Model}

Consider the status update system that is illustrated in Fig. \ref{fig_model}, where $N$ flows of update packets are sent through a queue with $M$ servers and an infinite buffer. Let $s_n$ and $d_n$ denote the source and destination nodes of flow $n$, respectively. Different flows can have different source and/or destination nodes. Each packet can be assigned to any server, and a server can only process one packet at a time. The service times of the update packets are  \emph{i.i.d.} across the servers and time. 


The system starts to operate at time $t=0$.  The $i$-th update packet of flow $n$ is generated  at the source node $s_n$ at time $S_{n,i}$, arrives at the queue at time $A_{n,i}$, and is delivered to the destination $d_n$ at time $D_{n,i}$ such that $0\leq S_{n,1} \leq S_{n,2}\leq\ldots$ and $S_{n,i}\leq A_{n,i}\leq D_{n,i}$.\ignore{
We consider two types of packet arrival processes: 
\begin{definition} \emph{Synchronized Arrivals:}
The packet arrival times are said to be \emph{synchronized} across the $N$ flows, if there exist a sequence $\{S_1, S_2,\ldots\}$ such that $S_{n,i} = S_i$ for all  $i=1,2,\ldots$ and $n=1,\ldots,N$. 
\end{definition}
\begin{definition} \emph{Periodic Arrivals:}
The packet arrival times are said to be \emph{periodic}, if there exist a period $T>0$ such that $S_{n,i} = iT $ for all  $i=1,2,\ldots$ and $n=1,\ldots,N$. This is a special case of synchronized arrivals.
\end{definition}

{\red out of order arrival, remove periodic arrivals} 
In practice, synchronized arrivals occur when there is a single source and multiple destinations, while periodic arrivals occur when the sources are synchronized by the same clock as in many monitoring and control applications, e.g., \cite{Phadke1994,Mainwaring:2002:WSN:570738.570751,Sivrikaya2004}. {\red Comparison with Ahmed's paper} More general arrival processes will be studied later.
}
We consider the following class of \emph{synchronized} packet generation and arrival processes:

\begin{definition} \emph{Synchronized Sampling and Arrivals:}
The packet generation and arrival times are said to be \emph{synchronized} across  the $N$ flows, if there exist two sequences $\{S_1, S_2,\ldots\}$ and $\{A_1,A_2,\ldots\}$ such that for all $i=1,2,\ldots,$ and $n=1,\ldots,N$
\begin{align}
S_{n,i} = S_i,~A_{n,i} = A_i.
\end{align}
\end{definition}
 Note that in this paper, the sequences $\{S_1, S_2,\ldots\}$ and $\{A_1,A_2,\ldots\}$ are \emph{arbitrary}. Hence,
\emph{out-of-order arrivals}, e.g., $S_i < S_{i+1}$ but $A_i > A_{i+1}$, are allowed. In addition, when there is a single flow ($N=1$),  synchronized sampling and arrivals reduce to  {arbitrary} packet generation and arrival processes that were considered in \cite{Bedewy2016,BedewyJournal2017,Bedewy2017,BedewyMultihop2017}. 

Let $\pi$ represent a scheduling policy that determines the packet being sent by the servers over time. Let $\Pi$ denote the set of \emph{causal} policies in which the scheduling decisions are made based on the history and current states of the system.
A policy is said to be \emph{preemptive}, if each server can switch to send another packet at any time; the preempted packet will be stored back to the queue, waiting to be sent at a later time.
A policy is said to be \emph{non-preemptive}, if each server must complete sending the current packet before starting to serve another packet. A policy is said to be \emph{work-conserving}, if all servers are  kept busy whenever  the queue is non-empty.  We use $\Pi_{np}$ to denote the set of non-preemptive causal policies such that $\Pi_{np}\subset \Pi$. 
Let 
\begin{align}
\mathcal{I}=\{S_{i}, A_{i},~ i=1,2,\ldots\} 
\end{align}
denote the packet generation and arrival times of the flows. 
We assume that the packet generation/arrival times $\mathcal{I}$ and the packet service times are
determined by two \emph{mutually independent} external processes, both of which do not change according to the adopted scheduling policy.

\subsection{Age  Metrics}
At any time $t\geq0$, the freshest packet delivered to the destination node $d_n$ is generated at time  
\begin{align}
U_{n} (t) \!=\! \max\{S_{n,i}\!:\! D_{n,i} \leq t, i\!=\!1,2,\ldots\}.
\end{align}
The \emph{age of information}, or simply the \emph{age}, of flow $n$ is defined as \cite{Song1990,KaulYatesGruteser-Infocom2012}
\begin{align}\label{eq_age}
\Delta_{n} (t) = t - U_{n} (t),
\end{align}
which is the time difference between  the current time $t$ and the generation time of the freshest packet currently available at destination $d_n$. Let $\bm{\Delta}(t)=(\Delta_{1} (t),\ldots,\Delta_{N} (t))$ denote the age vector of the $N$ flows at time $t$. 

We introduce an \emph{age penalty function} $p(\bm{\Delta}) = p\circ \bm{\Delta}$  to represent the level of dissatisfaction for having aged information at the $N$ destinations, where $p: \mathbb{R}^N\rightarrow \mathbb{R}$ can be any  \emph{non-decreasing} function  of the $N$-dimensional age vector  $\bm{\Delta}$. Some  examples of the age penalty function are: 
\begin{itemize}
\item[1.] The \emph{average age} of the $N$ flows is
\begin{align}\label{eq_avgage}
p_{\text{avg}} (\bm{\Delta}) = \frac{1}{N}\sum_{n=1}^N \Delta_{n}. 
\end{align}

\item[2.] The \emph{maximum age} of the $N$ flows is
\begin{align}
p_{\max} (\bm{\Delta}) = \max_{n=1,\ldots,N} \Delta_{n}.
\end{align}

\item[3.] The \emph{mean square age} of the $N$ flows is
\begin{align}
p_{\text{ms}} (\bm{\Delta}) = \frac{1}{N}\sum_{n=1}^N (\Delta_{n} )^2.
\end{align}

\item[4.] The \emph{$l$-norm of the age vector} of the $N$ flows is
\begin{align}
p_{\text{$l$-norm}} (\bm{\Delta}) = \left[\sum_{n=1}^N (\Delta_{n} )^l\right]^{\frac{1}{l}}, ~l\geq1.
\end{align}


\item[5.] The \emph{sum age penalty function} of the $N$ flows is
\begin{align}
p_{\text{sum-penalty}} (\bm{\Delta}) = \sum_{n=1}^N g(\Delta_{n}),
\end{align}
where $g: [0,\infty) \rightarrow \mathbb{R}$ is the age penalty function for each flow, which can be any \emph{non-decreasing} function of the age $\Delta$ of the flow \cite{Suninfocom2016,AgeOfInfo2016}. For example, a stair-shape function $g_1(\Delta)=\lfloor a \Delta\rfloor$ with $a\geq 0$ can be used to characterize the dissatisfaction of data staleness when the information of interests is checked periodically, and an exponential function $g_2(\Delta) = e^{a \Delta}$ is appropriate for online learning and control applications where the desire for information refreshing grows quickly with respect to the age \cite{AgeOfInfo2016}. \end{itemize}

In this paper, we consider a class of \emph{symmetric} and \emph{non-decreasing} age penalty functions, i.e.,
\begin{align}
\mathcal{P}_{\text{sym}}
\!=\!\{p: [0,\infty)^N\rightarrow \mathbb{R}  \text{ is symmetric and non-decreasing}\}.\nonumber
\end{align}
This is a fairly large class of age penalty functions, where the function   $p$ can be discontinuous, non-convex, or non-separable.
It is easy to see 
\begin{align}
\{p_{\text{avg}},p_{\max}, p_{\text{ms}},p_{\text{$l$-norm}}, 
p_{\text{sum-penalty}}\}\subset \mathcal{P}_{\text{sym}}.\nonumber
\end{align}
Note that the age vector $\bm{\Delta}$ is a function of time $t$ and policy $\pi$, and the age penalty function $p$ may change over time. 
We use $\{p_t \circ \bm{\Delta}_\pi(t), t\in[0,\infty)\}$ to represent the stochastic process generated by the \emph{time-dependent} age penalty function $p_t$ in policy $\pi$. We assume that the initial age $\bm\Delta_{\pi}(0^-)$ at time $t=0^-$ remains the same for all $\pi\in\Pi$.

\section{Multi-flow Update Scheduling}\label{sec_analysis}
In this section, we investigate update scheduling of multiple information flows. We first consider a system setting with a single server and exponential service times, where an age optimality result is established. Next, we study a more general system setting with multiple servers and NBU service times. In this case, age optimality is inherently difficult to achieve and we present a near age-optimal result.

\ignore{
Maximum Age Difference first (MAP) 
Maximum Age Back-pressure first (MAR)
Maximum Age backPressure first (MAP)
}

\subsection{Multiple Flows, Single Server, Exponential Service Times}
%

To address the multi-flow online scheduling problem, we consider a flow selection discipline called \emph{Maximum Age First  (MAF)}  \cite{LiInfocom2015,IgorAllerton2016,HsuTWC2017}, in which 
\emph{the flow with the maximum age is served the first, with ties broken arbitrarily}. 
A scheduling policy is defined by combining the MAF and LGFS disciplines as follows:


\ignore{
\begin{figure}
\centering
\includegraphics[width=0.2\textwidth]{./figs/policies}   
\caption{{\red Remove periodic arrivals.}
Packet service priorities of the MAF-LGFS policy for periodic arrivals with a period $T$, where  $\triangle$ denote delivered packets,  $\bigcirc$ denote undelivered packets waiting to be served,
and the numbers in the circles $\bigcirc$ represent the service priorities of the undelivered packets. 
The packets with priorities 1-3 are generated the last and  will be served the first; further, the packets with priorities 1 and 2 are from the flows with the maximum age and hence will be served earlier than the packet with priority 3.}\vspace{-0.0cm}
\label{fig_policies}
\end{figure}  }
\ignore{
\begin{figure}
\centering
\includegraphics[width=0.25\textwidth]{./figs/policies1}   
\caption{
Packet service priorities of the MAF-LGFS policy for synchronized packet generations and arrivals, where  $\triangle$ denote delivered packets,  $\bigcirc$ denote undelivered packets waiting to be served, the packet generation times $S_{n,i}$ are marked on the left, and the numbers in the circles $\bigcirc$ represent the service priorities of the undelivered packets. 
In particular, packets with priorities 1-4 are generated the last and  will be served the first; further, the service priorities of these 4 last generated packets are determined by using the maximum age (MA) first discipline. Notice that the service priorities are not determined by the packet arrival times $A_{n,i}$.}\vspace{-0.0cm}
\label{fig_policies}
\end{figure}  }

\begin{definition} \emph{Maximum Age First, Last Generated First Served (MAF-LGFS) policy:} In this policy, the last generated packet from the flow with the maximum age is served the first among all packets of all flows, with ties broken arbitrarily.
\end{definition}


The age optimality of the preemptive MAF-LGFS policy is established in the following theorem.

\begin{theorem}\label{thm1}
If (i) there is a single server ($M=1$), (ii) the packet generation and arrival times are \emph{synchronized} across the $N$ flows, and (iii) the packet service times are exponentially distributed and \emph{i.i.d.} across  time, then it holds that for all $\mathcal{I}$, all $p_t \in\mathcal{P}_{\text{sym}}$, and all $\pi\in\Pi$ 
\begin{align}\label{thm1eq1}
&[\{p_t \circ\bm{\Delta}_{\text{prmp, MAF-LGFS}}(t), t\geq 0\}\vert\mathcal{I}] \nonumber\\
\leq_{\text{st}} & [\{p_t \circ\bm{\Delta}_\pi(t), t\geq 0\}\vert\mathcal{I}],
\end{align}
or equivalently, for all $\mathcal{I}$, all $p_t \in\mathcal{P}_{\text{sym}}$, and all non-decreasing functional $\phi:\mathbb{V}\mapsto\mathbb{R}$
 \begin{align}\label{thm1eq2}
&\mathbb{E}\left[\phi (\{p_t \circ\bm{\Delta}_{\text{prmp, MAF-LGFS}}(t),t\geq 0\})\vert\mathcal{I}\right] \nonumber\\
= & \min_{\pi\in\Pi} \mathbb{E}\left[\phi (\{p_t \circ\bm{\Delta}_\pi(t),t\geq0\})\vert\mathcal{I}\right],
\end{align}
provided that the expectations in \eqref{thm1eq2} exist, where $\mathbb{V}$ is the set of Lebesgue measurable functions defined in \eqref{eq_functions}.
\end{theorem}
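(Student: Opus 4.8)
The plan is to establish \eqref{thm1eq1} through a pathwise coupling argument and then invoke the functional characterization \eqref{eq_order} to pass to the stochastic order. Since the statement conditions on $\mathcal{I}$, the generation/arrival epochs are frozen and the only randomness lies in the service times. Because those service times are i.i.d. exponential and, by assumption, driven by an external process independent of the scheduling decisions, I would first construct a coupling in which the \emph{potential} service-completion epochs form a single Poisson clock shared by the MAF-LGFS system and an arbitrary $\pi\in\Pi$. The memoryless property is what makes this legitimate even under preemption: the residual service time of the packet in service is always $\mathrm{Exp}(\mu)$ no matter how much service it has received, so preempting, switching, and resuming packets does not alter the law of the completion process. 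At each tick of this clock a delivery occurs in whichever system has a busy server; since the preemptive MAF-LGFS policy is work-conserving it uses every tick at which packets are present, whereas a general $\pi$ may idle and thereby forgo deliveries. This coupling preserves the marginal law of each age process, so it suffices to prove a sample-path domination under it.

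Second, I would reduce to sorted age vectors. Because every $p_t\in\mathcal{P}_{\text{sym}}$ is symmetric and non-decreasing, $p_t(\bm{\Delta})=p_t(\Delta_{[1]},\ldots,\Delta_{[N]})$, so the penalty depends on the age vector only through its order statistics and is monotone in each of them. It is therefore enough to show that, under the coupling, the descending-sorted age vector of MAF-LGFS is dominated componentwise by that of $\pi$ at every time, i.e. $\Delta_{[j]}^{\text{prmp, MAF-LGFS}}(t)\le \Delta_{[j]}^{\pi}(t)$ for all $j=1,\ldots,N$ and all $t\ge 0$. Granting this, $p_t\circ\bm{\Delta}_{\text{prmp, MAF-LGFS}}(t)\le p_t\circ\bm{\Delta}_{\pi}(t)$ holds pathwise for every $t$; applying any non-decreasing functional $\phi:\mathbb{V}\mapsto\mathbb{R}$ and taking expectations then yields \eqref{thm1eq2}, which is equivalent to \eqref{thm1eq1}.

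Third, I would prove the componentwise domination by induction over the ordered sequence of events (arrival epochs $A_i$ and completion ticks), the inductive hypothesis being that the sorted-age inequality holds just after each event. The easy transitions are the inter-event intervals, where every age grows at unit rate so the inequality is preserved, and the arrival epochs, which enlarge the pool of servable packets without changing any age. The crux is the completion tick. Here I would exploit that, by synchronized sampling together with LGFS, the freshest packet that has arrived is common to both systems and has a generation time independent of the policy; MAF-LGFS delivers the freshest undelivered copy of this packet to the flow currently carrying the \emph{largest} age, which simultaneously (i) reduces a component by the maximum amount available (LGFS) and (ii) applies that reduction to the top order statistic (MAF). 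A rearrangement/interchange argument should then show that replacing $\Delta_{[1]}^{\text{prmp, MAF-LGFS}}$ by this small post-delivery age keeps the re-sorted vector dominated by whatever $\pi$ produces when it delivers an arbitrary, possibly staler, packet to an arbitrary flow, or idles.

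The main obstacle is precisely this completion-tick step, because the two systems may have delivered different packets in the past, so their available-packet pools — and hence the exact post-delivery ages achieved by MAF-LGFS versus by $\pi$ — need not coincide, and a naive componentwise comparison of the two re-sorted multisets can fail. I anticipate that clean bookkeeping requires strengthening the inductive hypothesis beyond the sorted-age inequality — for instance, additionally comparing the sorted vectors of freshest-delivered generation times, or tracking, for each freshness threshold, the number of flows whose delivered information is at least that fresh — so that the divergence of the delivered-packet sets accumulated in the past is controlled. Showing that this strengthened hypothesis is preserved across a single coupled delivery, via a finite case analysis that invokes work-conservation and the LGFS/MAF priority, is where the real work lies; the exponential/memoryless structure enters only to make the completion clock couplable, while synchronization is what guarantees that the freshest available generation time is shared by both systems.
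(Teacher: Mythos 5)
Your setup---the memoryless coupling of service completions, the reduction to componentwise dominance of the descending-sorted age vectors via symmetry and monotonicity of $p_t$, and the induction over arrival and completion epochs---is exactly the skeleton of the paper's proof (its Coupling Lemma and Inductive Comparison Lemma). The gap is at precisely the step you flag as the ``main obstacle'': you assert that a naive comparison of the two re-sorted vectors ``can fail'' because the policies' delivered-packet histories diverge, and you propose strengthening the inductive hypothesis (tracking delivered generation times, freshness counts, etc.). That assertion is wrong in the synchronized setting, and the extra bookkeeping is unnecessary; what your proposal is missing is the one observation that closes the induction with the plain sorted-age hypothesis.

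Here is the observation. Under synchronized sampling and arrivals, let $W(t) = \max\{S_i : A_i \leq t\}$ be the generation time of the freshest packet that has arrived by time $t$. This quantity depends only on $\mathcal{I}$, not on the policy, and \emph{every} flow in \emph{every} policy has age at least $t - W(t)$, since no arrived packet is fresher. Because every flow possesses an arrived packet time-stamped $W(t)$ (this is where synchronization, not memorylessness, does the real work), the preemptive MAF-LGFS policy is always serving that packet on behalf of the current maximum-age flow; hence at a delivery epoch its sorted age vector transforms deterministically: the top component is removed and the value $t-W(t)$ is inserted at the bottom, i.e.\ $\Delta'_{[i],P} = \Delta_{[i+1],P}$ for $i<N$ and $\Delta'_{[N],P} = t - W(t)$. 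Meanwhile an arbitrary policy $\pi$ changes at most one flow's age at a delivery, so its re-sorted vector satisfies $\Delta'_{[i],\pi} \geq \Delta_{[i+1],\pi}$ for $i<N$, and trivially $\Delta'_{[N],\pi} \geq t - W(t)$. Chaining these with the inductive hypothesis $\Delta_{[i],P} \leq \Delta_{[i],\pi}$ gives $\Delta'_{[i],P} = \Delta_{[i+1],P} \leq \Delta_{[i+1],\pi} \leq \Delta'_{[i],\pi}$ and $\Delta'_{[N],P} = t - W(t) \leq \Delta'_{[N],\pi}$, and the induction closes. The divergence of the delivered-packet pools that worries you is a red herring: the post-delivery age achieved by MAF-LGFS equals the policy-independent floor $t-W(t)$ on all ages, so no memory of past deliveries is needed. (Indeed, this property genuinely fails when generations/arrivals are not synchronized, which is exactly why the theorem carries that hypothesis.) A last, minor difference: your single Poisson clock absorbs idling policies directly---an idle tick for $\pi$ only helps the comparison, by the same chain of inequalities---whereas the paper first restricts to work-conserving $\pi$ and dispatches non-work-conserving policies by a separate coupling; your variant is slightly cleaner there, but it does not repair the missing step above.
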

\begin{proof}[Proof idea]
If the packet generation and arrival times are synchronized across the flows, the preemptive MAF-LGFS policy satisfies the following property: When a packet is delivered to its destination, the flow with the maximum age before the packet delivery will have the minimum age among the $N$ flows once the packet is delivered.\footnote{Note that this property does not hold when   packet generations and arrivals are asynchronized.} This is one key idea used in the proof. See Appendix \ref{app1} for the details.  \end{proof}

Theorem \ref{thm1} tells us  that, for all  age penalty functions in $\mathcal{P}_{\text{sym}}$, all number of flows $N$, and all synchronized packet generation and arrival times $\mathcal{I}$, the preemptive MAF-LGFS policy  minimizes the stochastic process $\{p_t \circ\bm{\Delta}_\pi(t), t\geq 0\}$ among all causal policies in a stochastic ordering sense. {\blue We note that a weaker version of Theorem \ref{thm1} is to consider the mixture over the realizations of $\mathcal{I}$, where \eqref{thm1eq1} becomes
\begin{align}
&[\{p_t \circ\bm{\Delta}_{\text{prmp, MAF-LGFS}}(t), t\geq 0\}] \leq_{\text{st}} [\{p_t \circ\bm{\Delta}_\pi(t), t\geq 0\}],\nonumber
\end{align} 
and similarly, the condition $\mathcal{I}$ in \eqref{thm1eq2}  can be removed.}

\subsection{Multiple Flows, Multiple Servers, NBU Service Times}
Next, we consider a more general system setting with multiple servers and a class of New-Better-than-Used (NBU)  service time distributions that include exponential distribution as a special case. 

\begin{definition}  \emph{New-Better-than-Used Distributions:} Consider a non-negative random variable $X$ with complementary cumulative distribution function (CCDF) $\bar{F}(x)=\Pr[X>x]$. Then, $X$ is said to be \emph{New-Better-than-Used (NBU)} if for all $t,\tau\geq0$
\begin{equation}\label{NBU_Inequality}
\bar{F}(\tau +t)\leq \bar{F}(\tau)\bar{F}(t).
\end{equation} 
\end{definition}
Examples of NBU distributions include constant service time, exponential distribution, shifted exponential distribution, geometrical distribution, Erlang distribution, negative binomial distribution, etc. 

In  scheduling literature, optimal online scheduling has been successfully established for delay minimization in single-server queueing systems, e.g., \cite{Schrage68,Jackson55}, but can become inherently difficult in the multi-server cases. {\blue In particular, minimizing the average delay in deterministic scheduling problems with more than one servers is NP-hard  \cite{Leonardi:1997}. Similarly, delay-optimal stochastic scheduling in multi-class, multi-server queueing systems is deemed to be notoriously difficult \cite{Weiss:1992,Weiss:1995,Dacre1999}. The key challenge in multi-class, multi-server scheduling is that \emph{one cannot combine the resources of all the servers to jointly process the most important packet}. Due to the same reason,} age-optimal online scheduling is quite challenging in multi-flow, multi-server systems. In the sequel, we consider a slightly relaxed goal to seek for \emph{near} age-optimal online scheduling of multiple information flows, where our proposed scheduling policy is shown to be within a small additive gap from the optimum age performance.   


\begin{figure}
\centering 
\includegraphics[width=0.25\textwidth]{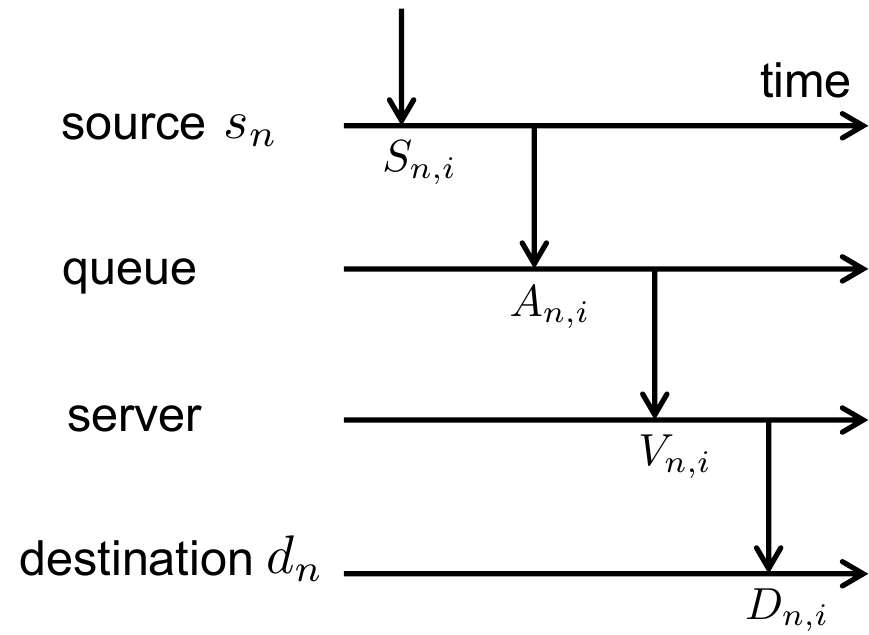} \caption{An illustration of $S_{n,i}$, $A_{n,i}$, $V_{n,i}$, and $D_{n,i}$.}
\label{fig_times1} 
\vspace{-5mm}
\end{figure}

Notice that the age $\age_{n}(t)$ in \eqref{eq_age} is determined by the packets that have been delivered to the destination $d_n$ by time $t$. To establish near age optimality, we consider an alternative age metric call the \emph{Age of Served Information}, which is determined by the packets that have started service by time $t$: Let $V_{n,i}$ denote the time that the $i$-th packet of flow $n$ is assigned to a server, i.e., the service starting time of the $i$-th packet of flow $n$, which is shown in Fig. \ref{fig_times1}. 
By definition, one can get
$S_{n,i}\leq A_{n,i}\leq V_{n,i}\leq D_{n,i}$.
The \emph{Age of Served Information} of flow $n$ is defined as
\begin{align}\label{eq_age_served}
\Xi_{n} (t) \!=\! t\! -\! \max\{S_{n,i}\!:\! V_{n,i} \leq t, i\!=\!1,2,\ldots\},
\end{align}
which is the time difference between the current time $t$ and the generation time of the freshest packet that has started service by time $t$.
As shown in Fig. \ref{fig_times2},   $\Xi_{n} (t)\leq \age_{n}(t)$.
Let $\bm{\Xi}(t)=(\Xi_{1} (t),\ldots,\Xi_{N} (t))$ denote the Age of Served Information vector at time $t$.

\begin{figure}
\centering 
\includegraphics[width=0.25\textwidth]{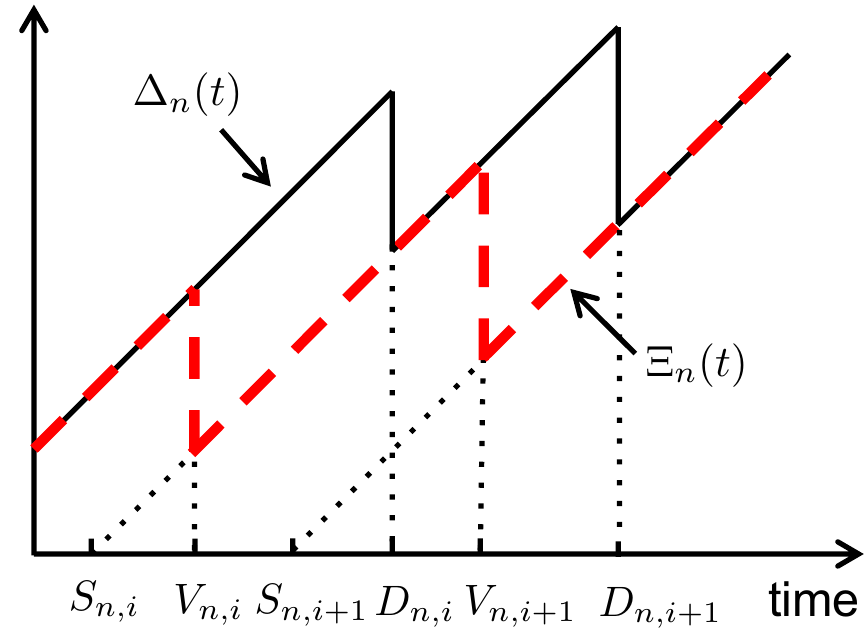} \caption{The Age of Served Information $\Xi_{n} (t)$ as a lower bound of  age $\age_{n}(t)$.}
\vspace{-5mm}
\label{fig_times2} 
\end{figure} 
We propose a new scheduling discipline called \emph{Maximum Age of Served Information First (MASIF)}, in which 
\emph{the flow with the maximum Age of Served Information is served the first, with ties broken arbitrarily}. Using this discipline, we define the following scheduling policy:

\begin{definition} \emph{Maximum Age of Served Information first, Last Generated First Served (MASIF-LGFS) policy:} In this policy, the last generated packet from the flow with the maximum \emph{Age of Served Information} is served the first among all packets of all flows, with ties broken arbitrarily. \end{definition}

{\blue In some previous studies, e.g., \cite{IgorAllerton2016,HsuTWC2017,CostaCodreanuEphremides_TIT}, it was proposed to discard old packets and only store and send the freshest one. While this technique can reduce the age, in many applications such as social updates, news seeds, and stock trading, some old packets with earlier generation times are still quite useful and are needed to be sent to the destinations. Next, we will show that the non-preemptive MASIF-LGFS policy, which does not discard old packets, is near age-optimal. Hence, the additional age reduction provided by discarding old packets in the non-preemptive MASIF-LGFS policy is not large. In order to establish this result, we first show that the age of served information of the non-preemptive MASIF-LGFS policy provides a uniform age lower bound for all non-preemptive and causal policies.}

\begin{theorem}\label{thm3}
If (i) the packet generation and arrival times are \emph{synchronized} across the $N$ flows and (ii) the packet service times are NBU and \emph{i.i.d.} across the servers and time, then it holds that for all $\mathcal{I}$, all $p_t \in\mathcal{P}_{\text{sym}}$, and all $\pi\in\Pi_{np}$ 
\begin{align}\label{thm3eq1}
&[\{p_t \circ\bm{\Xi}_{\text{non-prmp, MASIF-LGFS}}(t), t\geq 0\}\vert\mathcal{I}] \nonumber\\
\leq_{\text{st}} & [\{p_t \circ\bm{\Delta}_\pi(t), t\geq 0\}\vert\mathcal{I}],
\end{align}
or equivalently, for all $\mathcal{I}$, all $p_t \in\mathcal{P}_{\text{sym}}$, and all non-decreasing functional $\phi:\mathbb{V}\mapsto\mathbb{R}$
 \begin{align}\label{thm3eq2}
&\mathbb{E}\left[\phi (\{p_t \circ\bm{\Xi}_{\text{prmp, MAF-LGFS}}(t),t\geq 0\})\vert\mathcal{I}\right] \nonumber\\
\leq & \min_{\pi\in\Pi_{np}} \mathbb{E}\left[\phi (\{p_t \circ\bm{\Delta}_\pi(t),t\geq0\})\vert\mathcal{I}\right] \nonumber\\
\leq & \mathbb{E}\left[\phi (\{p_t \circ\bm{\age}_{\text{prmp, MAF-LGFS}}(t),t\geq 0\})\vert\mathcal{I}\right],
\end{align}
provided that the expectations in \eqref{thm3eq2} exist.
\end{theorem}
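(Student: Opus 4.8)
The plan is to prove the substantive inequality \eqref{thm3eq1}; the right-hand inequality in \eqref{thm3eq2} is the easy direction, since the non-preemptive MASIF-LGFS policy is itself a feasible member of $\Pi_{np}$, so $\min_{\pi\in\Pi_{np}}\mathbb{E}[\phi(\cdot)]$ is automatically no larger than its own penalty. For the lower bound I would first peel off the gap between $\bm{\Xi}$ and $\bm{\age}$: because $\Xi_n(t)\leq\age_n(t)$ holds sample-path-wise for \emph{every} policy and every $t$, and each $p_t\in\mathcal{P}_{\text{sym}}$ is non-decreasing, we get $p_t\circ\bm{\Xi}_\pi(t)\leq p_t\circ\bm{\age}_\pi(t)$ pointwise. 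Hence it suffices to prove the stronger statement that the non-preemptive MASIF-LGFS policy minimizes the \emph{age-of-served-information} process itself, i.e. $[\{p_t\circ\bm{\Xi}_{\text{MASIF-LGFS}}(t),t\geq0\}\vert\mathcal{I}]\leq_{\text{st}}[\{p_t\circ\bm{\Xi}_\pi(t),t\geq0\}\vert\mathcal{I}]$ for all $\pi\in\Pi_{np}$.

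I would establish this last inequality by a sample-path coupling combined with the characterization \eqref{eq_order}, thereby reducing a stochastic-ordering claim to an almost-sure one. The target invariant is that the \emph{sorted} age-of-served-information vectors satisfy $[\bm{\Xi}_{\text{MASIF-LGFS}}(t)]_{[i]}\leq[\bm{\Xi}_\pi(t)]_{[i]}$ for every rank $i$ and all $t$ on the coupled path. This is exactly what is needed at the end: since $p_t$ is symmetric and non-decreasing, rank-wise domination of the sorted vectors yields $p_t\circ\bm{\Xi}_{\text{MASIF-LGFS}}(t)\leq p_t\circ\bm{\Xi}_\pi(t)$ for all $t$, and \eqref{eq_order} then upgrades this to the stochastic order. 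The convenient feature of $\Xi_n(t)$ is that it only jumps downward at the instants a packet of flow $n$ \emph{starts} service, and otherwise grows at unit rate; so the invariant can be verified by induction over the discrete service-\emph{start} events, the ordering being preserved trivially during the linear-growth intervals and at the synchronized (hence common) generation/arrival epochs, which do not touch $\bm{\Xi}$ at all.

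The crux is the coupling of service completions across the two policies, and this is where the NBU hypothesis enters. Because service durations are \emph{i.i.d.} and independent of which packet is placed in service, for two non-preemptive \emph{work-conserving} policies I would couple the durations per server so that the entire multiset of service-start instants coincides; the only remaining freedom is which packet occupies each start slot, and here the MASIF-LGFS discipline prevails. By the synchronization assumption the freshest available packet has a generation time common to all flows, and any previously started packet had arrived earlier and therefore carries a generation time no larger than this freshest one; consequently, assigning the freshest available packet to the maximum-$\bm{\Xi}$ flow resets that flow's age of served information to the \emph{minimum} among the $N$ flows---the multi-server, service-start analogue of the property highlighted in the proof of Theorem \ref{thm1}---which is precisely the greedy move that keeps the sorted vector smallest. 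To reduce a general $\pi\in\Pi_{np}$ to the work-conserving case I would invoke \eqref{NBU_Inequality}: idling or deferring a start can only postpone service-start instants, while the NBU inequality guarantees that the residual duration of an already-running service is stochastically no longer than a freshly started one, so the servers of the always-busy MASIF-LGFS policy free up---and hence begin their next services---no later than those of $\pi$, preserving the rank-wise domination of start instants.

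The main obstacle I anticipate is exactly this coupling in the multi-server, non-work-conserving regime. Unlike the memoryless single-server setting of Theorem \ref{thm1}, where service completions can be re-coupled freely at every event, under general NBU service times the residual service times carry memory, and aligning the service-start processes of MASIF-LGFS and an arbitrary $\pi$ requires a stochastic coupling that simultaneously (a) matches or dominates the start instants rank by rank and (b) respects the per-server non-preemption constraint. Making (a) and (b) compatible---so that the clean sorted-vector induction can actually run on the coupled path---is the delicate point; once the coupling is in place, the remaining bookkeeping, namely that MASIF-LGFS's greedy assignment maintains $[\bm{\Xi}_{\text{MASIF-LGFS}}(t)]_{[i]}\leq[\bm{\Xi}_\pi(t)]_{[i]}$ across a single start event, is a routine exchange argument relying only on symmetry and on the synchronization of generation times.
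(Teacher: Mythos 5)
Your outer reductions are fine as far as they go: the right-hand inequality in \eqref{thm3eq2} is indeed immediate from the feasibility of MASIF-LGFS, and $\Xi_n(t)\leq\Delta_n(t)$ does hold pointwise for every policy. The fatal problem is the statement you reduce to: it is \emph{false} that the non-preemptive MASIF-LGFS policy minimizes the age-of-served-information process among all $\pi\in\Pi_{np}$. Counterexample (one flow, so symmetry and synchronization are vacuous; constant service times, which are NBU): let $S_1=A_1=0$, $S_2=A_2=10$, every service time equal to $15$, and initial age $0$. MASIF-LGFS is work-conserving (as required of policy $P$ in Lemma \ref{thm3lem_coupling}), so it must commit the single server to packet $1$ at time $0$ and, being non-preemptive, cannot start packet $2$ before time $15$; hence $\Xi_{\text{MASIF-LGFS}}(t)=t$ on $[0,15)$. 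The causal, non-preemptive policy $\pi$ that idles on $[0,10)$ and then starts packet $2$ at time $10$ has $\Xi_{\pi}(t)=t-10$ for $t\geq 10$, so
\begin{align}
\Xi_{\pi}(t)=t-10<t=\Xi_{\text{MASIF-LGFS}}(t),\qquad t\in[10,15),\nonumber
\end{align}
deterministically; conditioned on this $\mathcal{I}$ both processes are deterministic, so your target invariant $[\bm\Xi_{\text{MASIF-LGFS}}(t)]_{[i]}\leq[\bm\Xi_{\pi}(t)]_{[i]}$ and the claimed $\leq_{\text{st}}$ both fail on a whole interval. The root cause is your step ``idling or deferring a start can only postpone service-start instants'': even when the start \emph{instants} of MASIF-LGFS do dominate rank by rank (they do in this example), what matters for $\bm\Xi$ is \emph{which generation time} gets started, and a deferring policy can start a fresher, later-arriving packet strictly earlier than a work-conserving non-preemptive policy whose servers are stuck on stale packets. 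This is exactly why idling cannot be dismissed here the way it is in the exponential single-server setting of Theorem \ref{thm1}; your argument could at best be salvaged within the subclass of work-conserving policies, which is not enough for $\Pi_{np}$.

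The theorem survives only because it never makes the start-versus-start comparison you propose: it compares service \emph{starts} of MASIF-LGFS against packet \emph{deliveries} of $\pi$, i.e., $\bm\Xi_{P}$ versus $\bm\Delta_{\pi}$, which builds in a one-service-time lag. That lag is precisely what NBU can pay for. The paper (following \cite{sun2016delay,sun2017delay}) couples the two policies so that whenever a packet of $\pi_1$ occupies a server over an interval $[\tau,\nu]$ and the MASIF-LGFS queue is non-empty, some packet in $P_1$ \emph{starts} service inside $[\tau,\nu]$ (the weak work-efficiency ordering of Definition \ref{def_order}; the NBU inequality \eqref{NBU_Inequality} makes a residual service stochastically no longer than a fresh one). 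It then constructs an intermediate policy $\pi_1'$ whose deliveries are pulled back to those start instants, runs your sorted-vector exchange induction on pairs ``start in $P_1$ / delivery in $\pi_1'$'' (Lemma \ref{thm3lem2}), and finishes with $\bm\Delta_{\pi_1'}(t)\leq\bm\Delta_{\pi_1}(t)$. Note that in the counterexample above this comparison is consistent: $\pi$'s fresher start at time $10$ only becomes a \emph{delivery} at time $25$, by which time MASIF-LGFS has long since started packet $2$. To repair your proof you must drop the intermediate quantity $\bm\Xi_{\pi}$ altogether and prove the start-versus-delivery comparison directly.
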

\begin{proof}[Proof idea]
Under synchronized packet generations and arrivals, the non-preemptive MASIF-LGFS policy satisfies: When a packet starts service, the flow with the maximum Age of Served Information before the service starts will have the minimum Age of Served Information  among the $N$ flows once the service starts. Theorem \ref{thm3} is proven by using this idea and  the sample-path method developed in \cite{sun2016delay,sun2017delay}. {\blue We note that the sample-path method in \cite{sun2016delay,sun2017delay} is the key for addressing the challenge in multi-flow, multi-server scheduling.} 
\ifreport
See Appendix \ref{app2} and \cite{sun2016delay,sun2017delay} for the details.
\else
See our technical report \cite{SunMultiFlow18} and \cite{sun2016delay,sun2017delay} for the details.
\fi
\end{proof}

Hence, the non-preemptive MASIF-LGFS policy is near age-optimal in the sense of \eqref{thm3eq2}. In particular, for the average age of the $N$ flows  in \eqref{eq_avgage} (i.e., $p_t = p_{\text{avg}} $), we can obtain
\begin{theorem}\label{thm4}
Under the conditions of Theorem \ref{thm3}, it holds that for all $\mathcal{I}$
\begin{align}\label{eq_gap}
\min_{\pi\in\Pi_{np}}\! [\bar{\age}_{ \pi}|\mathcal{I}] \!\leq\! [\bar{\age}_{\text{non-prmp, MASIF-LGFS}}|\mathcal{I}]\!\leq\! \min_{\pi\in\Pi_{np}}\! [\bar{\age}_{ \pi}|\mathcal{I}] \!+\! \mathbb{E}[X],\nonumber
\end{align}
where $[\bar{\age}_{ \pi}|\mathcal{I}] = \lim\sup_{T\rightarrow \infty} \frac{1}{T} \mathbb{E}[\int_0^T p_{\text{avg}} \circ\bm{\Delta}_\pi (t) dt| \mathcal{I}]$ is the expected time-average of the average age of the $N$ flows, and $\mathbb{E}[X]$ is the mean service time of one packet.
\end{theorem}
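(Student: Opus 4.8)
The plan is to prove the two inequalities separately. The lower bound $\min_{\pi\in\Pi_{np}}[\bar{\Delta}_{\pi}|\mathcal{I}]\le[\bar{\Delta}_{\text{MASIF-LGFS}}|\mathcal{I}]$ is immediate, since the non-preemptive MASIF-LGFS policy itself belongs to $\Pi_{np}$ and hence cannot beat the infimum over $\Pi_{np}$. The content is in the upper bound. First I would invoke Theorem \ref{thm3}: applying the process ordering \eqref{thm3eq1} with the penalty fixed to $p_t=p_{\text{avg}}$ and the non-decreasing functional $\phi_T(f)=\frac{1}{T}\int_0^T f(t)\,dt$ gives, for every horizon $T$ and every $\pi\in\Pi_{np}$,
$$\frac{1}{T}\mathbb{E}\Big[\int_0^T p_{\text{avg}}\circ\bm{\Xi}_{\text{MASIF-LGFS}}(t)\,dt\,\Big|\,\mathcal{I}\Big] \le \frac{1}{T}\mathbb{E}\Big[\int_0^T p_{\text{avg}}\circ\bm{\Delta}_\pi(t)\,dt\,\Big|\,\mathcal{I}\Big].$$
Taking $\limsup_{T\to\infty}$ on both sides yields $[\bar{\Xi}_{\text{MASIF-LGFS}}|\mathcal{I}]\le\min_{\pi\in\Pi_{np}}[\bar{\Delta}_\pi|\mathcal{I}]$, where $\bar{\Xi}$ is the time-average of the average Age of Served Information. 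It therefore suffices to prove that MASIF-LGFS loses at most $\mathbb{E}[X]$ when its Age of Served Information is replaced by its true age, i.e. $[\bar{\Delta}_{\text{MASIF-LGFS}}|\mathcal{I}]-[\bar{\Xi}_{\text{MASIF-LGFS}}|\mathcal{I}]\le\mathbb{E}[X]$.

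Because $p_{\text{avg}}$ is the arithmetic mean, this reduces, per flow, to bounding the long-run time-average of $\Delta_n(t)-\Xi_n(t)$ by $\mathbb{E}[X]$; averaging the $N$ identical per-flow bounds reproduces $\mathbb{E}[X]$. Fix a flow $n$ and write $W_n(t)=\max\{S_{n,i}:V_{n,i}\le t\}$ and $U_n(t)=\max\{S_{n,i}:D_{n,i}\le t\}$, so that $\Delta_n(t)-\Xi_n(t)=W_n(t)-U_n(t)\ge 0$ with $W_n,U_n$ non-decreasing right-continuous step functions. I would evaluate the gap by layer-cake (horizontal) integration: for each generation-time level $v$ let $\tau^{\Xi}(v)=\inf\{V_{n,i}:S_{n,i}\ge v\}$ and $\tau^{\Delta}(v)=\inf\{D_{n,i}:S_{n,i}\ge v\}$ be the first times a flow-$n$ packet with generation time at least $v$ starts service and is delivered, respectively. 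Since $W_n,U_n$ are running maxima, $\{t:W_n(t)\ge v\}=[\tau^{\Xi}(v),\infty)$ and $\{t:U_n(t)\ge v\}=[\tau^{\Delta}(v),\infty)$, so
$$\int_0^T\!\big(\Delta_n(t)-\Xi_n(t)\big)dt=\int_{\mathbb{R}}\big|\{t\le T:U_n(t)<v\le W_n(t)\}\big|\,dv\le\int_{\mathbb{R}}\big(\tau^{\Delta}(v)-\tau^{\Xi}(v)\big)^+dv.$$
The packet $r(v)$ attaining $\tau^{\Xi}(v)$ is delivered by $\tau^{\Xi}(v)+X_{r(v)}$, hence $\tau^{\Delta}(v)-\tau^{\Xi}(v)\le X_{r(v)}$. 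As $v$ sweeps upward, $r(v)$ ranges over the successive \emph{record} packets $r_1,r_2,\dots$ (those setting new maxima of $S_{n,i}$ in order of service-start time), with record values $\sigma_1<\sigma_2<\cdots$; on the slab $v\in(\sigma_{k-1},\sigma_k]$ we have $r(v)=r_k$, yielding $\int_0^T(\Delta_n-\Xi_n)\,dt\le\sum_k(\sigma_k-\sigma_{k-1})X_{r_k}$, the sum over records whose service begins by time $T$.

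Finally I would take the conditional expectation given $\mathcal{I}$. The identity of the $k$-th record $r_k$, its gap $\sigma_k-\sigma_{k-1}$, and the event that its service begins by $T$ are all determined by the system history up to the dispatch instant $V_{r_k}$; since the policy is causal and the service times are \emph{i.i.d.} and independent of $\mathcal{I}$, the service time $X_{r_k}$ is independent of that history and has mean $\mathbb{E}[X]$. Conditioning term-by-term factors the mean out, $\mathbb{E}[\sum_k(\sigma_k-\sigma_{k-1})X_{r_k}|\mathcal{I}]=\mathbb{E}[X]\,\mathbb{E}[\sum_k(\sigma_k-\sigma_{k-1})|\mathcal{I}]$, and the telescoping sum of generation-time gaps is at most the largest generation time of any packet served by $T$, which is at most $T$ because $S_{n,i}\le A_{n,i}\le V_{n,i}\le T$. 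Dividing by $T$ and letting $T\to\infty$ gives the per-flow bound $\mathbb{E}[X]$, whence $[\bar{\Delta}_{\text{MASIF-LGFS}}|\mathcal{I}]\le[\bar{\Xi}_{\text{MASIF-LGFS}}|\mathcal{I}]+\mathbb{E}[X]\le\min_{\pi\in\Pi_{np}}[\bar{\Delta}_\pi|\mathcal{I}]+\mathbb{E}[X]$, completing the sandwich. The delicate step is this factorization: one must argue that conditioning on $\mathcal{I}$ preserves the independence between each record's service time and the history-measurable data determining which packet is that record and how large its generation-time gap is. This is precisely where causality of the policy and the \emph{i.i.d.} assumption on service times (not merely the NBU property used in Theorem \ref{thm3}) are essential, and it is the only place requiring a careful filtration/optional-stopping argument.
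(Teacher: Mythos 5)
Your proposal is correct and follows the same route the paper intends: the left inequality is immediate since MASIF-LGFS $\in \Pi_{np}$, and the right one combines Theorem \ref{thm3} (applied with $p_t = p_{\text{avg}}$ and the non-decreasing functional $\phi_T(f)=\frac{1}{T}\int_0^T f(t)\,dt$, then taking $\limsup_{T\to\infty}$) with a Wald-type bound showing that the time-average gap between $\bm{\Delta}$ and $\bm{\Xi}$ under the non-preemptive MASIF-LGFS policy is at most $\mathbb{E}[X]$ --- which is precisely the argument of Theorem 4 in \cite{BedewyJournal2017} to which the paper defers its omitted proof. Your layer-cake/record-packet decomposition, with the factorization $\mathbb{E}[G_j X_j|\mathcal{I}]=\mathbb{E}[G_j|\mathcal{I}]\,\mathbb{E}[X]$ justified by causality and the independence of each packet's service time from the history at its dispatch instant, is a correct self-contained rendering of that cited gap bound, with non-preemption entering exactly where it must (each record packet is delivered one service time after it starts service).
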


The proof of Theorem \ref{thm4} is  similar to that of Theorem 4 in \cite{BedewyJournal2017} and hence is omitted here. By Theorem \ref{thm4}, the  average age of the non-preemptive MASIF-LGFS policy is within an additive gap from the optimum, and the gap $\mathbb{E}[X]$ is invariant of the packet arrival and generation times $\mathcal{I}$, the number of flows $N$, and the number of servers $M$.

\ignore{
\subsection{Periodic Arrivals, Multiple Servers, Exponential Service Times}

\begin{theorem}\label{thm2}
If (i) the packet arrival times are \emph{periodic} and (ii) the packet service times are exponential distributed and \emph{i.i.d.} across servers and time, then for all $\mathcal{I}$, all $p \in\mathcal{P}_{\text{Sch}}$, and all $\pi\in\Pi$ 
\begin{align}\label{thm2eq1}
&[\{p \circ\bm{\Delta}_{\text{prmp, MAR-MA}}(t), t\geq 0\}\vert\mathcal{I}] \nonumber\\
\leq_{\text{st}} & [\{p \circ\bm{\Delta}_\pi(t), t\geq 0\}\vert\mathcal{I}],
\end{align}
or equivalently, for all $\mathcal{I}$, all $p \in\mathcal{P}_{\text{Sch}}$, and all non-decreasing functional $\phi:\mathbb{V}\mapsto\mathbb{R}$
 \begin{align}\label{thm2eq2}
&\mathbb{E}\left[\phi (\{p \circ\bm{\Delta}_{\text{prmp, MAR-MA}}(t),t\geq 0\})\vert\mathcal{I}\right] \nonumber\\
= & \min_{\pi\in\Pi} \mathbb{E}\left[\phi (\{p \circ\bm{\Delta}_\pi(t),t\geq0\})\vert\mathcal{I}\right],
\end{align}
provided that the expectations in \eqref{thm2eq2} exist.
\end{theorem}

\section{Proof of Theorem \ref{thm2}}
We first establish two lemmas that are useful in the proof of Theorem \ref{thm2}. 
Let the age vector $\bm\Delta_{\pi}(t)$ denote the \emph{system state} of policy $\pi$ at time $t$ and $\{\bm\Delta_{\pi}(t),t\geq 0\}$ denote the \emph{state process} of policy $\pi$. Because the system starts to operate at time $0$, we assume that $\bm\Delta_{\pi}(0^-)=\bm 0$ at time $t=0^-$ for all $\pi\in\Pi$. For notational simplicity, let policy $P$ represent the preemptive MAR-MA policy. 

We define an \emph{MAR-MA ordering} that sorts packets according to the priority rule in the MAR-MA policy: As shown Fig. \ref{fig_policies}(a), the packets with larger age reduction have higher priorities; among the packets with the same age reduction, the packets with larger age have higher priorities. 
Using the memoryless property of exponential distribution, we can obtain the following coupling lemma:

\begin{lemma}\emph{(Coupling Lemma)}\label{thm2coupling}
For any given $\mathcal{I}$, consider policy $P$ and any \emph{work-conserving} policy $\pi\in \Pi$. If  the packet service times are exponential distributed and \emph{i.i.d.} across servers and time,   
 then there exist policy $P_1$ and  policy $\pi_1$ in the same probability space which satisfy the same scheduling disciplines with policy $P$ and policy $\pi$, respectively,  such that 
\begin{itemize}
\itemsep0em 
\item[1.] The state process $\{\bm\Delta_{P_1}(t),t\geq 0\}$ of policy $P_1$ has the same distribution with the state process $\{\bm\Delta_{P}(t),t\geq 0\}$ of policy $P$,
\item[2.] The state process $\{\bm\Delta_{\pi_1}(t),t\geq 0\}$ of policy $\pi_1$ has the same distribution with the state process $\{\bm\Delta_{\pi}(t),t\geq 0\}$  of policy $\pi$,
\item[3.] If a packet with the $j$-th highest MAR-MA order among all the packets under service is delivered at time $t$ in policy $P_1$ as $\bm\Delta_{P_1}(t)$ evolves, then almost surely, a packet  with the $j$-th highest MAR-MA order among all the packets under service is  delivered at time $t$ in policy $\pi_1$ as $\bm\Delta_{\pi_1}(t)$ evolves; and vice versa. 
\end{itemize} 
\end{lemma}
\begin{proof}
Note that all policies have identical arrival processes, and the service times are  memoryless. Following the inductive sample-path construction in the proof of \cite[Theorem 6.B.3]{StochasticOrderBook}, one can construct the packet deliveries one by one in policy $P_1$ and policy $\pi_1$ to prove this lemma. The details are omitted. 
\end{proof}

\begin{lemma} \emph{(Inductive Comparison)}\label{thm2lem2}
Under the conditions of Lemma \ref{thm2coupling}, 
suppose that a packet is delivered in both policy $P_1$ and policy $\pi_1$  at the same time $t$. The system state  of policy $P_1$ is $\bm\Delta_{P_1}$ before the packet delivery, which becomes $\bm\Delta_{P_1}'$ after the packet delivery. The system state  of policy $\pi_1$ is $\bm\Delta_{\pi_1}$ before the packet delivery, which becomes $\bm\Delta_{\pi_1}'$ after the packet delivery. If  the packet arrival times are  \emph{periodic} and
\begin{equation}\label{thm2hyp1}
\bm\Delta_{P_1} \prec_{\text{w}} \bm\Delta_{\pi_1},
\end{equation}
then
\begin{equation}\label{thm2law6}
\bm\Delta_{P_1}' \prec_{\text{w}} \bm\Delta_{\pi_1}'.
\end{equation}  
\end{lemma}

\begin{proof}

For periodic arrivals with a period $T$, let $V(t) = \max\{iT: iT \leq t,i=1,2,\ldots\}$ 
be the time-stamp of the freshest packet of each flow that has been generated by time $t$. At time $t$, because no packets is generated later than $V(t)$, we can obtain
\begin{align}
\Delta_{i,P_1} \geq\Delta_{i,P_1}' \geq t-V(t),~i=1,\ldots,N,\nonumber\\
\Delta_{i,\pi_1} \geq\Delta_{i,\pi_1}' \geq t-V(t),~i=1,\ldots,N.\label{eq_proof_2}
\end{align} 

Policy $P_1$ follows the same scheduling discipline with the preemptive MAR-MA policy.

Hence, the delivered packet in policy $P_1$ must be from the flow with the maximum age $\Delta_{[1],P_1}$ (denoted as flow $n^*$), and the delivery packet must be generated at time $V(t)$. In other words, the age of flow $n^*$ is reduced from the maximum age $\Delta_{[1],P_1}$ to the minimum age $\Delta_{[N],P_1}'=t-V(t)$, and the age of the other $(N-1)$ flows remain unchanged. Hence, 
\begin{align}\label{eq_proof_3}
\Delta_{[i],P_1}' &= \Delta_{[i+1],P_1},~i=1,\ldots,N-1,\\
\Delta_{[N],P_1}' &= t - V(t). \label{eq_proof_4}
\end{align}

In policy $\pi_1$, the delivered packet can be any packet from any flow. For all possible cases of policy $\pi_1$, it must hold that 
\begin{align}\label{eq_proof_1}
\Delta_{[i],\pi_1}' \geq \Delta_{[i+1],\pi_1},~i=1,\ldots,N-1. 
\end{align}
By combining \eqref{hyp1}, \eqref{eq_proof_3}, and \eqref{eq_proof_1}, we have
\begin{align}
\Delta_{[i],\pi_1}' \geq \Delta_{[i+1],\pi_1} \geq \Delta_{[i+1],P_1} = \Delta_{[i],P_1}',~i=1,\ldots,N-1.\nonumber
\end{align}
In addition, combining \eqref{eq_proof_2} and \eqref{eq_proof_4}, yields
\begin{align}
\Delta_{[N],\pi_1}' \geq  t-V(t) = \Delta_{[N],P_1}'.\nonumber
\end{align}
By this, \eqref{law6} is proven.
\end{proof}
}

\begin{figure}
\centering 
\includegraphics[width=0.35\textwidth]{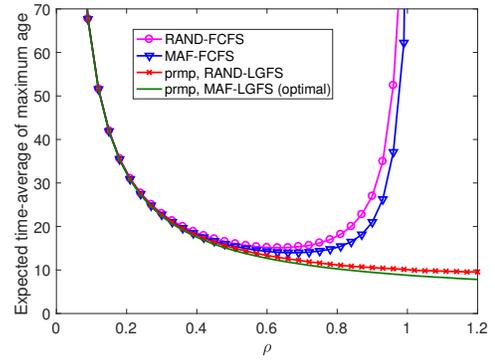} 

\caption{Expected time-average of the maximum age of 3 flows in a system with a single server and \emph{i.i.d.} exponential service times.}
\label{fig_simulation1} 
\end{figure} 

\section{Numerical Results}
In this section, we evaluate the age performance of several multi-flow online scheduling policies. These scheduling policies are defined by combining the flow selection disciplines $\{$MAF, MASIF, RAND$\}$ and the packet selection disciplines $\{$FCFS, LGFS$\}$, where  RAND represents randomly choosing a flow among the flows with un-served packets. The packet generation times $S_i$ follow a Poisson process with rate $\lambda$, and the time difference $(A_i-S_i)$ between packet generation and arrival is equal to either 0 or $4/\lambda$ with equal probability. The mean service time of each server is set as $\mathbb{E}[X]=1/\mu=1$. Hence, the traffic intensity is $\rho =  \lambda N/M$, where $N$ is the number of flows and $M$ is the number of servers.

Figure \ref{fig_simulation1} illustrates the expected time-average of the maximum age $p_{\max} (\bm\age(t))$ of 3 flows in a system with a single server and \emph{i.i.d.} exponential service times. One can see that the preemptive MAF-LGFS policy has the best age performance and its age is quite small even for $\rho>1$, in which case  the queue is actually unstable. On the other hand,  both the RAND and FCFS disciplines have much higher age. Note that there is no need for preemptions under the FCFS discipline.  Figure \ref{fig_simulation2} plots the expected time-average of the average age $p_{\text{avg}} (\bm\age(t))$ of 50 flows in a system with 3 servers and \emph{i.i.d.} NBU service times. In particular, the  service time $X$ follows the following shifted exponential distribution:
\begin{align}
\Pr[X>x] = \left\{\begin{array}{l l}1,&\text{if}~x<\frac{1}{3};\\
\exp[-\frac{3}{2}(x-\frac{1}{3})],&\text{if}~x\geq \frac{1}{3}.
\end{array}\right.
\end{align}
One can observe that the non-preemptive MASIF-LGFS policy is better than the other policies, and is quite close to the age lower bound where the gap from the lower bound is no more than the mean service time $\mathbb{E}[X]=1$. {\blue One interesting observation is that the non-preemptive MASIF-LGFS policy is better than the non-preemptive MAF-LGFS policy for NBU service times. The reason behind this  is as follows: When multiple servers are idle in the non-preemptive MAF-LGFS policy, these servers are assigned to process multiple packets from the flow with the maximum age (say flow $n$). This reduces the age of flow $n$, but at a cost of postponing the service of the flows with the second and third maximum ages. On the other hand, in the non-preemptive MASIF-LGFS policy, once a packet from the flow with the maximum age of served information  (say flow $m$) is assigned to a server, the age of served information of flow $m$ drops greatly and the next server will be assigned to process the flow with second maximum age of served information. 
As shown in \cite{sun2016delay,sun2017delay}, using multiple parallel servers to process different flows is beneficial for NBU service times. The behavior of non-preemptive MASIF-LGFS policy is similar to the maximum matching scheduling algorithms, e.g., \cite{Joo:2009,Ji2014} for time-slotted systems, where multiple servers are assigned to different flows in each time-slot. One difference is that the non-preemptive MASIF-LGFS policy can even operate in continuous-time systems, but the maximum matching algorithms cannot. 
 }

\begin{figure}
\centering 
\includegraphics[width=0.35\textwidth]{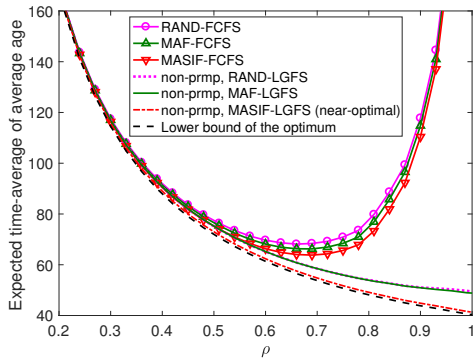} 
\caption{Expected time-average of the average age of 50 flows in a system with 3 servers and \emph{i.i.d.} NBU service times.}
\label{fig_simulation2} 
\end{figure} 

\section{Conclusion and Future Work}\label{sec_conclusion}
We have developed online scheduling policies and shown they are (near) optimal for minimizing the age of information in multi-flow, multi-server systems. Similar with \cite{BedewyJournal2017}, the results in this paper can be generalized to consider packet replications over multiple servers. {\blue In addition, similar to the results in \cite{sun2016delay,sun2017delay}, Theorem \ref{thm3} and Theorem \ref{thm4} can be generalized to the case that  the servers have different NBU service time distributions.}
Other future research directions include asynchronized packet arrivals, packet transmissions with errors, and multi-flow updates in multi-hop networks.

\bibliographystyle{IEEEtran}
\bibliography{ref,ref1,sueh}
\appendices

\section{Proof of Theorem \ref{thm1}}\label{app1}
We first establish two lemmas that are useful to prove Theorem \ref{thm1}. 
Let the age vector $\bm\Delta_{\pi}(t)$ denote the \emph{system state} of policy $\pi$ at time $t$ and $\{\bm\Delta_{\pi}(t),t\geq 0\}$ denote the \emph{state process} of policy $\pi$. For notational simplicity, let policy $P$ represent the preemptive MAF-LGFS policy. Using
the memoryless property of exponential distribution, we can
obtain the following coupling lemma:
\begin{lemma}\emph{(Coupling Lemma)}\label{coupling}
For any given $\mathcal{I}$, consider policy $P$ and any \emph{work-conserving} policy $\pi\in \Pi$. If (i) there is a single server ($M=1$) and (ii) the packet service times are exponentially distributed and \emph{i.i.d.} across  time,   
 then there exist policy $P_1$ and  policy $\pi_1$ in the same probability space which satisfy the same scheduling disciplines with policy $P$ and policy $\pi$, respectively,  such that 
\begin{itemize}
\itemsep0em 
\item[1.] The state process $\{\bm\Delta_{P_1}(t),t\geq 0\}$ of policy $P_1$ has the same distribution with the state process $\{\bm\Delta_{P}(t),t\geq 0\}$ of policy $P$,
\item[2.] The state process $\{\bm\Delta_{\pi_1}(t),t\geq 0\}$ of policy $\pi_1$ has the same distribution with the state process $\{\bm\Delta_{\pi}(t),t\geq 0\}$  of policy $\pi$,
\item[3.] If a packet is delivered at time $t$ in policy $P_1$ as $\bm\Delta_{P_1}(t)$ evolves, then almost surely, a packet is delivered at time $t$ in policy $\pi_1$ as $\bm\Delta_{\pi_1}(t)$ evolves; and vice versa. 
\end{itemize} 
\end{lemma}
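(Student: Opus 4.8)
The plan is to construct the coupled policies $P_1$ and $\pi_1$ by an inductive sample-path argument following the template of \cite[Theorem 6.B.3]{StochasticOrderBook}, in which the \emph{only} randomness that is coupled is the sequence of service-completion epochs. The guiding observation is that, with a single server and memoryless (exponential) service, the time from any event epoch at which the server is busy until the next service completion is $\exp(\mu)$-distributed and independent of the past, regardless of how many preemptions have occurred or which packet is currently in service. Hence the delivery epochs of a work-conserving single-server policy are governed solely by the busy/idle pattern of the server together with a rate-$\mu$ completion clock, and the identity of the served packet is irrelevant to \emph{when} deliveries occur. This is precisely what makes a single shared clock able to drive both systems.

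First I would set up a common probability space carrying the arrival epochs $\{A_i\}$ (deterministic given $\mathcal{I}$) together with a single sequence of i.i.d.\ $\exp(\mu)$ random variables to be used as the successive completion-clock increments. I would then build the two sample paths $\bm\Delta_{P_1}(\cdot)$ and $\bm\Delta_{\pi_1}(\cdot)$ simultaneously, processing events (arrivals and deliveries) in increasing time order, while maintaining throughout the construction the invariant that the two systems have delivered exactly the same number of packets at every time $t$. Since both policies are work-conserving and see the same arrival process, this invariant is equivalent to the statement that the single server is busy in $P_1$ at time $t$ if and only if it is busy in $\pi_1$ at time $t$.

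The inductive step splits on the state at the current event epoch. If both servers are idle, the next event is a common arrival, after which both servers become busy and the delivery counts remain equal. If both servers are busy, the next event is either a common arrival (again preserving the counts) or a service completion; in the latter case I would draw the next $\exp(\mu)$ increment from the common clock and schedule a completion at the \emph{same} epoch in both systems. Memorylessness is what legitimizes this: even though $P_1$ may have preempted and restarted service on a different packet than $\pi_1$, the remaining service time in each system measured from the current epoch is $\exp(\mu)$, so a single coupled exponential variable correctly generates a simultaneous delivery in both. This yields conclusion~3, and because the scheduling disciplines of $P$ and $\pi$ are left untouched (only the completion epochs are coupled) while those epochs retain their correct i.i.d.\ exponential law, conclusions~1 and~2 follow.

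The step I expect to be the main obstacle is that the busy/idle invariant and the simultaneity of deliveries cannot be proved one before the other: the ability to use a single shared clock rests on both servers being busy at the same times, yet that fact is itself a consequence of coupling the completion epochs, so the two claims must be carried together in the same induction. The single-server hypothesis is essential here---with $M>1$ the numbers of busy servers in $P_1$ and $\pi_1$ could diverge and no single completion clock would suffice---so I would flag explicitly where $M=1$ is used. The remaining bookkeeping, namely that the coupled completion increments, restricted to busy periods, reproduce the law of i.i.d.\ exponential service times for each policy, is routine given memorylessness and would be stated rather than belabored.
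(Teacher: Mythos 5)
Your proposal is correct and takes essentially the same approach as the paper: the paper's own proof simply notes that the arrival processes are identical and the service times are memoryless, and then invokes the inductive sample-path construction of Theorem 6.B.3 in \cite{StochasticOrderBook} to build the packet deliveries one by one in $P_1$ and $\pi_1$, omitting the details. Your write-up fills in exactly that omitted bookkeeping (the shared exponential completion clock, the joint busy/idle and equal-delivery-count invariant, and the explicit role of $M=1$ and work conservation), so it is a faithful, more detailed version of the paper's argument rather than a different one.
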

\ifreport
\begin{proof}
Note that all policies have identical arrival processes, and the service times are  memoryless. Following the inductive construction used in the proof of Theorem 6.B.3 in \cite{StochasticOrderBook}, one can construct the packet deliveries one by one in policy $P_1$ and policy $\pi_1$ to prove this lemma. The details are omitted. 
\end{proof}
\else
\begin{proof}
See our technical report \cite{SunMultiFlow18}.
\end{proof}
\fi

We will compare policy $P_1$ and policy $\pi_1$ on a sample path by using the following lemma: 

\begin{lemma} \emph{(Inductive Comparison)}\label{lem2}
Under the conditions of Lemma \ref{coupling}, suppose that a packet is delivered in policy $P_1$ and a packet is delivered in policy $\pi_1$  at the same time $t$. The system state  of policy $P_1$ is $\bm\Delta_{P_1}$ before the packet delivery, which becomes $\bm\Delta_{P_1}'$ after the packet delivery. The system state  of policy $\pi_1$ is $\bm\Delta_{\pi_1}$ before the packet delivery, which becomes $\bm\Delta_{\pi_1}'$ after the packet delivery. If the packet generation and arrival times are \emph{synchronized} across the $N$ flows and
\begin{equation}\label{hyp1}
\Delta_{[i],P_1} \leq \Delta_{[i],\pi_1},~i=1,\ldots,N,
\end{equation}
then
\begin{equation}\label{law6}
\Delta_{[i],P_1}' \leq \Delta_{[i],\pi_1}',~i=1,\ldots,N.
\end{equation}  
\end{lemma}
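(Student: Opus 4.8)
The plan is to prove Lemma~\ref{lem2} by analyzing how the single delivery at time $t$ transforms the \emph{sorted} age vectors of the two coupled policies, and then chaining the resulting order relations through the hypothesis \eqref{hyp1}. The fact I would establish first is a uniform lower bound enabled by synchronization: since the generation and arrival times are common across flows, the freshest packet that has arrived by time $t$ has a single generation time $S_{\max}(t)=\max\{S_i : A_i\le t\}$, and no flow (in either policy) can have received a packet fresher than this. Hence at time $t$ every age in either policy satisfies $\Delta_{m}\ge t - S_{\max}(t)$.

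Next I would characterize the $P_1$ update exactly. Because $P_1$ follows MAF-LGFS, the delivered packet belongs to the current maximum-age flow (the flow realizing $\Delta_{[1],P_1}$) and, by LGFS, is the freshest available packet, whose generation time is $S_{\max}(t)$; thus that flow's age is reset to $t-S_{\max}(t)$, which by the bound above is the \emph{minimum} age among the $N$ flows. This is precisely the key property quoted in the proof idea. The remaining $N-1$ ages are unchanged, so the sorted vector drops its largest entry and appends the new minimum:
\begin{align}
\Delta_{[i],P_1}' = \Delta_{[i+1],P_1},~~ i=1,\ldots,N-1, \qquad \Delta_{[N],P_1}' = t - S_{\max}(t). \nonumber
\end{align}
For $\pi_1$ (an arbitrary work-conserving policy) the single delivery changes only one flow's age and can only decrease it, and the reset value is again $\ge t-S_{\max}(t)$ since the delivered packet is no fresher than $S_{\max}(t)$. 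Decreasing one coordinate of a vector gives the interlacing inequality $\Delta_{[i],\pi_1}' \ge \Delta_{[i+1],\pi_1}$ for $i=1,\ldots,N-1$ (the $i$-th largest of the new vector dominates the $(i+1)$-th largest of the old one, as at most one of the old top $i+1$ entries was lowered), while the uniform bound yields $\Delta_{[N],\pi_1}' \ge t-S_{\max}(t)$.

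Finally I would chain these relations. For $i=1,\ldots,N-1$,
\begin{align}
\Delta_{[i],\pi_1}' \ge \Delta_{[i+1],\pi_1} \ge \Delta_{[i+1],P_1} = \Delta_{[i],P_1}', \nonumber
\end{align}
using the $\pi_1$-interlacing, the hypothesis \eqref{hyp1}, and the $P_1$ update in turn; for $i=N$ we have $\Delta_{[N],\pi_1}' \ge t-S_{\max}(t) = \Delta_{[N],P_1}'$. This establishes \eqref{law6}.

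The main obstacle is justifying that the maximum-age flow becomes the minimum-age flow after the $P_1$ delivery, which is exactly where synchronization is indispensable: it is what guarantees a single freshest-available generation time $S_{\max}(t)$ shared by all flows, so that resetting the max-age flow to $t-S_{\max}(t)$ genuinely produces the global minimum. The only other points needing care are the degenerate case in which the max-age flow already holds the freshest packet (so all ages equal $t-S_{\max}(t)$, consistent with the displayed $P_1$ update) and the elementary interlacing inequality, which I would verify by a short counting argument on the top $i+1$ entries of the pre-delivery $\pi_1$ vector.
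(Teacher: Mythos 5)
Your proposal is correct and follows essentially the same route as the paper's proof: you define the same quantity $S_{\max}(t)=\max\{S_i: A_i\le t\}$ (the paper's $W(t)$), derive the same uniform lower bound from synchronization, obtain the identical shift-and-append update for the sorted $P_1$ ages, use the same interlacing inequality for $\pi_1$, and chain them through \eqref{hyp1} exactly as the paper does. Your added care about the interlacing counting argument and the degenerate all-equal case only makes explicit what the paper leaves implicit.
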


\ifreport
\begin{proof}
For synchronized packet generation and arrivals, let $W(t) = \max\{S_i: A_i \leq t\}$ 
be the time-stamp of the freshest packet of each flow that has arrived to the queue by time $t$. At time $t$, because no packets that has arrived is generated later than $W(t)$, we can obtain
\begin{align}
\Delta_{i,P_1} \geq\Delta_{i,P_1}' \geq t-W(t),~i=1,\ldots,N,\nonumber\\
\Delta_{i,\pi_1} \geq\Delta_{i,\pi_1}' \geq t-W(t),~i=1,\ldots,N.\label{eq_proof_2}
\end{align} 

Because there is only one server and policy $P_1$ follows the same scheduling discipline with the preemptive MAF-LGFS policy, each delivered packet in policy $P_1$ must be from the flow with the maximum age $\Delta_{[1],P_1}$ (denoted as flow $n^*$), and the delivery packet must be the last generated packet that is time-stamped with $W(t)$. In other words, the age of flow $n^*$ is reduced from the maximum age $\Delta_{[1],P_1}$ to the minimum age $\Delta_{[N],P_1}'=t-W(t)$, and the ages of the other $(N-1)$ flows remain unchanged. Hence, 
\begin{align}\label{eq_proof_3}
\Delta_{[i],P_1}' &= \Delta_{[i+1],P_1},~i=1,\ldots,N-1,\\
\Delta_{[N],P_1}' &= t - W(t). \label{eq_proof_4}
\end{align}

In policy $\pi_1$, the delivered packet can be any packet from any flow. For all possible cases of policy $\pi_1$, it must hold that 
\begin{align}\label{eq_proof_1}
\Delta_{[i],\pi_1}' \geq \Delta_{[i+1],\pi_1},~i=1,\ldots,N-1. 
\end{align}
By combining \eqref{hyp1}, \eqref{eq_proof_3}, and \eqref{eq_proof_1}, we have
\begin{align}
\Delta_{[i],\pi_1}' \geq \Delta_{[i+1],\pi_1} \geq \Delta_{[i+1],P_1} = \Delta_{[i],P_1}',~i=1,\ldots,N-1.\nonumber
\end{align}
In addition, combining \eqref{eq_proof_2} and \eqref{eq_proof_4}, yields
\begin{align}
\Delta_{[N],\pi_1}' \geq  t-W(t) = \Delta_{[N],P_1}'.\nonumber
\end{align}
By this, \eqref{law6} is proven.
\end{proof}
\else
\begin{proof}
See our technical report \cite{SunMultiFlow18}.
\end{proof}
\fi

Now we are ready to prove Theorem \ref{thm1}.
\begin{proof}[Proof of Theorem \ref{thm1}]
Consider any work-conserving policy $\pi\in\Pi$. By Lemma \ref{coupling}, there exist policy $P_1$ and policy $\pi_1$
satisfying the same scheduling disciplines with policy $P$ and policy $\pi$, respectively, and the packet delivery times in policy $P_1$ and policy $\pi_1$ are synchronized almost surely.

For any given sample path of policy $P_1$ and policy $\pi_1$, $\bm\Delta_{P_1}(0^-) = \bm\Delta_{\pi_1}(0^-)$ at time $t=0^-$. We consider two cases:

\emph{Case 1:} When there is no packet delivery, the age of each flow grows linearly with a slope 1. 

\emph{Case 2:} When a packet is delivered, the evolution of the age is governed by  Lemma \ref{lem2}. 

By induction over time, we obtain
\begin{align}\label{eq_thm1_proof1}
\Delta_{[i],P_1} (t) \leq \Delta_{[i],\pi_1} (t),~i=1,\ldots,N,~t\geq 0.
\end{align}

For any symmetric and non-decreasing   function $p_t$, it holds from \eqref{eq_thm1_proof1} that for all sample paths and all $t\geq 0$
\begin{align}\label{eq_thm1_proof3}
&p_t\circ \bm \Delta_{P_1}(t) \nonumber\\
=& p_t(\Delta_{1,P_1} (t), \ldots, \Delta_{N,P_1} (t))\nonumber\\
=& p_t (\Delta_{[1],P_1} (t), \ldots, \Delta_{[N],P_1} (t))\nonumber\\
\leq & p_t (\Delta_{[1],\pi_1} (t), \ldots, \Delta_{[N],\pi_1} (t))\nonumber\\
=& p_t (\Delta_{1,\pi_1} (t), \ldots, \Delta_{N,\pi_1} (t))\nonumber\\
=& p_t\circ \bm \Delta_{\pi_1}(t).
\end{align}
By Lemma \ref{coupling}, the state process $\{\bm\Delta_{P_1}(t),t\geq 0\}$ of policy $P_1$ has the same distribution with the state process $\{\bm\Delta_{P}(t),t\geq 0\}$ of policy $P$;
the state process $\{\bm\Delta_{\pi_1}(t),t\geq 0\}$ of policy $\pi_1$ has the same distribution with the state process $\{\bm\Delta_{\pi}(t),t\geq 0\}$  of policy $\pi$. By \eqref{eq_thm1_proof3} and Theorem 6.B.30 in \cite{StochasticOrderBook}, \eqref{thm1eq1} holds for all work-conserving policy $\pi\in\Pi$. 

For non-work-conserving policies $\pi$, because the service times are exponentially distributed and \emph{i.i.d.} across servers and time, server idling only postpones the delivery times of the packets. One can construct a coupling to show that for any non-work-conserving policy $\pi$, there exists a work-conserving policy $\pi'$ whose age process is smaller than that of policy $\pi$ in stochastic ordering; the details are omitted. 
As a result, \eqref{thm1eq1} holds for all policies $\pi\in\Pi$.

Finally, the equivalence between \eqref{thm1eq1} and \eqref{thm1eq2} follows from \eqref{eq_order}. This completes the proof.
\end{proof}

\ifreport

\section{Proof of Theorem \ref{thm3}}\label{app2}
This proof is motivated by the sample-path  method developed in \cite{sun2016delay,sun2017delay} for near delay-optimal scheduling  in multi-server queueing systems.

We first provide two useful lemmas. 
Let $(\bm\Delta_{\pi}(t),\bm\Xi_{\pi}(t))$ denote the \emph{system state} of policy $\pi$ at time $t$ and $\{(\bm\Delta_{\pi}(t),\bm\Xi_{\pi}(t)),t\geq 0\}$ denote the \emph{state process} of policy $\pi$. For notational simplicity, let policy $P$ represent the non-preemptive MASIF-LGFS policy.

In single-server queueing systems, the following \emph{work conservation law} (or its generalizations)
plays an important role in the analysis of scheduling performance:
At any time, the expected total amount of time for completing the packets in the queue is invariant among all work-conserving policies \cite{Leonard_Kleinrock_book,Jose2010,Gittins:11}. However, the work conservation law does not hold in multi-server queueing systems, where it is difficult to fully utilize all the servers to process the packets. Specifically, it may happen that some servers are busy while the remaining servers are idle, where the idleness leads to inefficient packet service and a performance gap from the optimum. 
In the sequel, we introduce an ordering to compare the efficiency of packet service in different policies in a near-optimal sense, which is called \emph{weak work-efficiency ordering}.\footnote{Two work-efficiency orderings were used in \cite{sun2016delay,sun2017delay} to study (near) delay-optimal online scheduling in multi-server queueing systems.}

\begin{definition} \label{def_order} \emph{Weak Work-efficiency Ordering \cite{sun2016delay,sun2017delay}:}
For any given $\mathcal{I}$ and a sample path of two policies $\pi_1,\pi_2\in\Pi_{np}$, policy $\pi_1$ is said to be \emph{weakly more work-efficient than} policy $\pi_2$, if the following assertion is true:
\emph{For each packet $j$ executed in policy $\pi_2$, if
\begin{itemize}
\item[1.] In policy $\pi_2$, packet $j$ starts service at time $\tau$ and completes service at time $\nu$ ($\tau\leq \nu$), 
\item[2.] In policy $\pi_1$, the queue is not empty during $[\tau,\nu]$, 
\end{itemize}
then there always exists one corresponding packet $j'$ in policy $\pi_1$ which starts service during $[\tau,\nu]$.} \end{definition}

\begin{figure}
\centering 
\includegraphics[width=0.3\textwidth]{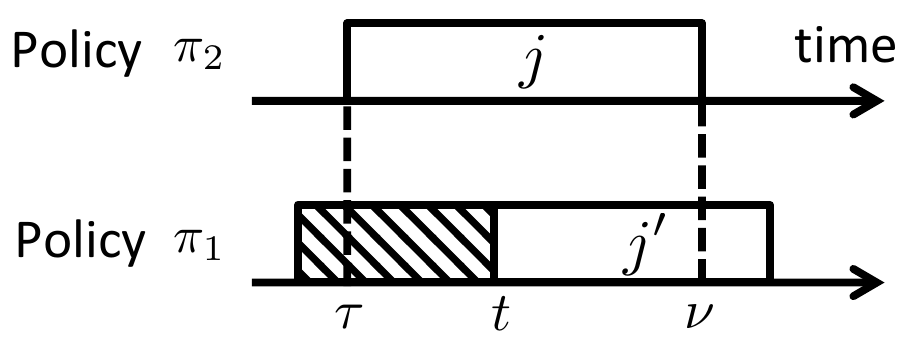} \caption{A sample-path illustration of the weak work-efficiency ordering between two policies $\pi_1$ and $\pi_2$, where the service duration of a packet is indicated by a rectangle, without specifying which server  is used to process the packet. 
In this figure, a packet $j$ starts service at time $\tau$ and completes service at time $\nu$ in policy $\pi_2$, a corresponding packet $j'$ that starts service during $[\tau,\nu]$ in policy $\pi_1$.}
\label{Work_Efficiency_Ordering} 
\end{figure} 

An sample-path illustration of the weak work-efficiency ordering is provided in Fig. \ref{Work_Efficiency_Ordering}. In particular, if policy $\pi_1$ is {weakly more work-efficient than} policy $\pi_2$, then each packet in policy $\pi_1$ must start service during the service duration $[\tau,\nu]$ of its corresponding packet in policy $\pi_2$, or the queue is empty during  $[\tau,\nu]$ in policy $\pi_1$. Note that the weak work-efficient ordering does not require to specify which server is used to  serve each packet. 


The following coupling lemma was established in \cite{sun2017delay} by using the property of NBU distributions and the fact that policy $P$ (i.e., the non-preemptive MASIF-LGFS policy) is work-conserving:
\begin{lemma}\emph{(Coupling Lemma)} \cite[Lemma 2]{sun2017delay}\label{thm3lem_coupling}
Consider two policies $P,\pi\in \Pi_{np}$. If (i) policy $P$ is work-conserving and (ii) the packet service times are NBU, independent across the servers, and \emph{i.i.d.} across  the packets assigned to the same server, then there exist policy $P_1$ and  policy $\pi_1$ in the same probability space which satisfy the same scheduling disciplines with policy $P$ and policy $\pi$, respectively,  such that 
\begin{itemize}
\itemsep0em 
\item[1.] The state process $\{(\bm\Delta_{P_1}(t),\bm\Xi_{P_1}(t)),t\geq 0\}$ of policy $P_1$ has the same distribution with the state process $\{(\bm\Delta_{P}(t),\bm\Xi_{P}(t)),t\geq 0\}$ of policy $P$,
\item[2.] The state process $\{(\bm\Delta_{\pi_1}(t),\bm\Xi_{\pi_1}(t)),t\geq 0\}$ of policy $\pi_1$ has the same distribution with the state process $\{(\bm\Delta_{\pi}(t),\bm\Xi_{\pi}(t)),t\geq 0\}$  of policy $\pi$,
\item[3.] Policy $P_1$ is weakly more work-efficient than policy $\pi_1$ with probability one. 
\end{itemize} 
\end{lemma}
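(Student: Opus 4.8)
The plan is to build the coupled policies $P_1$ and $\pi_1$ by a forward-in-time inductive construction over the ordered sequence of service-start and service-completion events, in the spirit of the coupling in \cite[Theorem 6.B.3]{StochasticOrderBook}, with the NBU property as the single structural ingredient that makes the work-efficiency comparison go through. The two easy conclusions are the distributional ones (conditions~1 and~2): I would drive each server's successive service durations by its own i.i.d. sequence of uniform random variables, independent across servers, and generate service times by the quantile transform, so that along every server the service times are i.i.d. with the correct NBU marginal. This automatically yields $\{(\bm\Delta_{P_1}(t),\bm\Xi_{P_1}(t)),t\geq 0\} \stackrel{d}{=} \{(\bm\Delta_{P}(t),\bm\Xi_{P}(t)),t\geq 0\}$ and the analogous identity for $\pi_1$ versus $\pi$, provided the construction only ever matches residual and fresh service times without altering these marginals. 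The real content is condition~3.

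The key consequence of NBU on which everything rests is the residual bound: if $X$ is a fresh service time with CCDF $\bar F$ and a packet has already been in service for a duration $a\geq 0$, then its residual service time is stochastically dominated by a fresh service time, $[X-a\mid X>a]\leq_{\text{st}} X$. This is immediate from \eqref{NBU_Inequality}, since $\Pr[X-a>t\mid X>a]=\bar F(a+t)/\bar F(a)\leq \bar F(t)$. The construction then advances event by event: whenever $\pi_1$ assigns a packet to a server at some time $\tau$, I consider the interval $[\tau,\nu]$ over which $\pi_1$ serves it, with fresh service time $X=\nu-\tau$. If the queue of $P_1$ is empty somewhere on $[\tau,\nu]$ there is nothing to prove for this packet. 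Otherwise, since $P_1$ is work-conserving, all $M$ of its servers are busy throughout $[\tau,\nu]$; I would designate one such server, whose residual service time $R$ at time $\tau$ satisfies $R\leq_{\text{st}} X$ by the residual bound. Realizing this stochastic domination as a pointwise inequality $R\leq X$ in the coupling, the designated $P_1$-server completes at time $\tau+R\leq \nu$, and by work conservation $P_1$ immediately starts a new packet during $[\tau,\nu]$. This is exactly the corresponding packet $j'$ demanded by Definition~\ref{def_order}, so $P_1$ is weakly more work-efficient than $\pi_1$ on this sample path.

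The main obstacle I anticipate is global bookkeeping: making the event-by-event coupling consistent over the entire sample path while simultaneously preserving the i.i.d.\ NBU structure of each server's service-time sequence in both policies. The delicate point is that the $\pi_1$-packet's service time $X$ is already fixed by $\pi_1$'s own randomness, so the residual-domination inequality $R\leq X$ cannot be imposed by naively reusing a single uniform for a $P_1$ service that was (partly) determined earlier; one must instead realize the domination $\min_k R_k \leq_{\text{st}} X$ across the busy $P_1$-servers in a way that leaves each server's sequence i.i.d.\ and never disturbs previously committed service times. One must also match each $\pi_1$-packet to a distinct $P_1$-packet and handle the harmless case where $P_1$'s queue empties on $[\tau,\nu]$. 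Since this is precisely the coupling established in \cite[Lemma 2]{sun2017delay}, I would carry out the induction exactly as there rather than reprove it from scratch.
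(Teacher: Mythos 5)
Your proposal is correct and follows essentially the same route as the paper: the paper itself offers no self-contained proof of this lemma, deferring entirely to \cite[Lemma 2]{sun2017delay}, and your sketch of the mechanism underlying that coupling---the NBU residual bound $[X-a \mid X>a]\leq_{\text{st}} X$, work conservation forcing all servers busy while the queue is non-empty, and the freed server immediately starting the corresponding packet $j'$ within $[\tau,\nu]$---is exactly the argument carried out there. Your identification of the genuine difficulty (performing the residual-versus-fresh coupling event by event without disturbing the i.i.d.\ service-time structure or previously committed durations) and your deferral to the cited lemma for that bookkeeping matches the paper's own treatment.
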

The proof of Lemma \ref{thm3lem_coupling} is provided in \cite{sun2017delay}.
Note that Lemma \ref{thm3lem_coupling} holds even if  policy $P$ is replaced by any non-preemptive work-conserving policy.

We will compare the age lower bound of policy $P_1$ and the age of policy $\pi_1$ on a sample path by using the following lemma:

\begin{lemma} \emph{(Inductive Comparison)}\label{thm3lem2}
Under the conditions of Lemma \ref{coupling}, suppose that a packet starts service in policy $P_1$ and a packet completes service (i.e., delivered to the destination) in policy $\pi_1$  at the same time $t$. The system state  of policy $P_1$ is $(\bm\Delta_{P_1},\bm\Xi_{P_1})$ before the service starts, which becomes $(\bm\Delta_{P_1}',\bm\Xi_{P_1}')$ after the service starts. The system state  of policy $\pi_1$ is $(\bm\Delta_{\pi_1},\bm\Xi_{\pi_1})$ before the service completes, which becomes $(\bm\Delta_{\pi_1}',\bm\Xi_{\pi_1}')$ after the service completes.
 If the packet generation and arrival times are \emph{synchronized} across the $N$ flows and
\begin{equation}\label{thm3hyp1}
 \Xi_{[i],P_1} \leq \Delta_{[i],\pi_1},~i=1,\ldots,N,
\end{equation}
then
\begin{equation}\label{thm3law6}
\Xi_{[i],P_1}' \leq \Delta_{[i],\pi_1}',~i=1,\ldots,N.
\end{equation}  
\end{lemma}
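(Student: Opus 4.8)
The plan is to mirror, almost verbatim, the sample-path argument used for Lemma~\ref{lem2} in the proof of Theorem~\ref{thm1}, replacing the age $\bm\Delta_{P_1}$ by the Age of Served Information $\bm\Xi_{P_1}$ and replacing ``packet delivery in $P_1$'' by ``service start in $P_1$''. First I would introduce $W(t) = \max\{S_i : A_i \leq t\}$, the generation time of the freshest packet that has arrived by time $t$, which by synchronization is common to all flows. Since any packet that has started service (in $P_1$) or been delivered (in $\pi_1$) by time $t$ must have arrived by time $t$, its generation time is at most $W(t)$; hence $\Xi_{n,P_1} \geq t - W(t)$ and $\Delta_{n,\pi_1}\geq t - W(t)$ for every flow $n$, and the same lower bounds persist after the event for the primed quantities. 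This records $t-W(t)$ as the common global floor for both metrics.

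Next I would analyze policy $P_1$. Because $P_1$ obeys the MASIF-LGFS discipline, the packet starting service at time $t$ is the last-generated available packet---time-stamped $W(t)$ under synchronized arrivals---of the flow $m^*$ that currently has the maximum Age of Served Information $\Xi_{[1],P_1}$. Its service-start time is $t$, so immediately afterward $\Xi_{m^*}'=t-W(t)$, the global minimum, while the other $N-1$ flows are unchanged. Re-sorting then gives $\Xi_{[i],P_1}' = \Xi_{[i+1],P_1}$ for $i=1,\dots,N-1$ and $\Xi_{[N],P_1}' = t - W(t)$, the exact analogue of \eqref{eq_proof_3}--\eqref{eq_proof_4}. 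For policy $\pi_1$, a single completion can reduce the age of only one flow, so by the elementary order-statistic fact that decreasing one coordinate leaves the $i$-th largest entry no smaller than the old $(i+1)$-th largest, I get $\Delta_{[i],\pi_1}' \geq \Delta_{[i+1],\pi_1}$ for $i=1,\dots,N-1$, while the last order statistic is covered by the floor $\Delta_{[N],\pi_1}'\geq t-W(t)$.

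Combining these with the hypothesis \eqref{thm3hyp1} applied at index $i+1$ closes the argument: for $i<N$, $\Delta_{[i],\pi_1}' \geq \Delta_{[i+1],\pi_1}\geq \Xi_{[i+1],P_1} = \Xi_{[i],P_1}'$, and for $i=N$, $\Delta_{[N],\pi_1}' \geq t - W(t) = \Xi_{[N],P_1}'$, which is precisely \eqref{thm3law6}. I expect the step needing the most care to be the claim that the service-starting packet in $P_1$ is time-stamped exactly $W(t)$ and therefore drops $\Xi_{m^*}$ all the way to the global minimum $t-W(t)$; this is where synchronization of generations and arrivals together with the LGFS tie-break are indispensable, and it is the $\Xi$-analogue of the footnoted property accompanying Theorem~\ref{thm1} that fails under asynchronous arrivals. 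Everything else is the same reindexing bookkeeping as in Lemma~\ref{lem2}. Crucially, unlike the delivery-based argument this one never references how many servers are busy, so it goes through unchanged in the multi-server setting---the genuine multi-server difficulty being isolated in the weak work-efficiency coupling of Lemma~\ref{thm3lem_coupling}, which is what pairs each completion in $\pi_1$ with a corresponding service start in $P_1$ in the main proof.
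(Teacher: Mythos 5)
Your proposal is correct and follows essentially the same argument as the paper's own proof: the same floor $t-W(t)$ from synchronized generations and arrivals, the same observation that MASIF-LGFS sends the $W(t)$-stamped packet of the maximum-$\Xi$ flow so that $\Xi_{[i],P_1}' = \Xi_{[i+1],P_1}$ and $\Xi_{[N],P_1}' = t-W(t)$, the same order-statistic inequality $\Delta_{[i],\pi_1}' \geq \Delta_{[i+1],\pi_1}$ for $\pi_1$, and the same index-shifted combination with the hypothesis. Your closing remark that the argument never invokes the number of busy servers is in fact slightly more careful than the paper's text, which carries over the phrase ``there is only one server'' from the single-server lemma even though Theorem \ref{thm3} concerns multiple servers.
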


\begin{proof}
For synchronized packet generation and arrivals,  let $W(t) = \max\{S_i: A_i \leq t\}$ 
be the time-stamp of the freshest packet of each flow that has arrived to the queue by time $t$. At time $t$, because no packets that has arrived is generated later than $W(t)$, we can obtain
\begin{align}
\Xi_{i,P_1} \geq\Xi_{i,P_1}' \geq t-W(t),~i=1,\ldots,N,\nonumber\\
\Delta_{i,\pi_1} \geq\Delta_{i,\pi_1}' \geq t-W(t),~i=1,\ldots,N.\label{thm3eq_proof_2}
\end{align} 

Because there is only one server and policy $P_1$ follows the same scheduling discipline with the non-preemptive MASIF-LGFS policy, each  packet starts service in policy $P_1$ must be from the flow with the maximum  age of served information $\Xi_{[1],P_1}$ (denoted as flow $n^*$), and the delivery packet must be the last generated packet that is time-stamped with $W(t)$. In other words, the age of served information of flow $n^*$ is reduced from the maximum age of served information $\Xi_{[1],P_1}$ to the minimum age of served information $\Xi_{[N],P_1}'=t-W(t)$, and the ages of served information of the other $(N-1)$ flows remain unchanged. Hence, 
\begin{align}\label{thm3eq_proof_3}
\Xi_{[i],P_1}' &= \Xi_{[i+1],P_1},~i=1,\ldots,N-1,\\
\Xi_{[N],P_1}' &= t - W(t). \label{thm3eq_proof_4}
\end{align}

In policy $\pi_1$, the delivered packet can be any packet from any flow. For all possible cases of policy $\pi_1$, it must hold that 
\begin{align}\label{thm3eq_proof_1}
\Delta_{[i],\pi_1}' \geq \Delta_{[i+1],\pi_1},~i=1,\ldots,N-1. 
\end{align}
By combining \eqref{thm3hyp1}, \eqref{thm3eq_proof_3}, and \eqref{thm3eq_proof_1}, we have
\begin{align}
\Delta_{[i],\pi_1}' \geq \Delta_{[i+1],\pi_1} \geq \Xi_{[i+1],P_1} = \Xi_{[i],P_1}',~i=1,\ldots,N-1.\nonumber
\end{align}
In addition, combining \eqref{thm3eq_proof_2} and \eqref{thm3eq_proof_4}, yields
\begin{align}
\Delta_{[N],\pi_1}' \geq  t-W(t) = \Xi_{[N],P_1}'.\nonumber
\end{align}
By this, \eqref{thm3law6} is proven.
\end{proof}
Now we are ready to prove Theorem \ref{thm3}.
\begin{proof}[Proof of Theorem \ref{thm3}]
Consider any  policy $\pi\in\Pi_{np}$. By Lemma \ref{thm3lem_coupling}, there exist policy $P_1$ and policy $\pi_1$
satisfying the same scheduling disciplines with policy $P$ and policy $\pi$, respectively, and policy $P_1$ is weakly more work-efficient than policy $\pi_1$ with probability one.

Next, we construct a policy $\pi_1'$ in the same probability space with policy  $P_1$ and policy $\pi_1$: Let $\bm\Delta_{\pi_1'}(0^-) = \bm\Xi_{P_1}(0^-) = \bm\Delta_{\pi_1}(0^-) $ at time $t=0^-$. For each pair of corresponding packet $j$ and packet $j'$ mentioned in the definition of the weak work-efficiency ordering, if 
\begin{itemize}
\item In policy $\pi_1$, packet $j$ starts service at time $\tau$ and completes service at time $\nu$ ($\tau\leq \nu$),
\item In policy $P_1$, the queue is not empty  during $[\tau,\nu]$,
\item In policy $P_1$, the corresponding packet $j'$ starts service at time $t\in[\tau,\nu]$,
\end{itemize}
then in policy $\pi_1'$, packet $j$ starts service at time $\tau$ and completes service at time $t\in[\tau,\nu]$, as illustrated in Fig. \ref{Work_Efficiency_Ordering2}. Policy $\pi_1'$ satisfies the following two useful properties: 
\begin{figure}
\centering 
\includegraphics[width=0.3\textwidth]{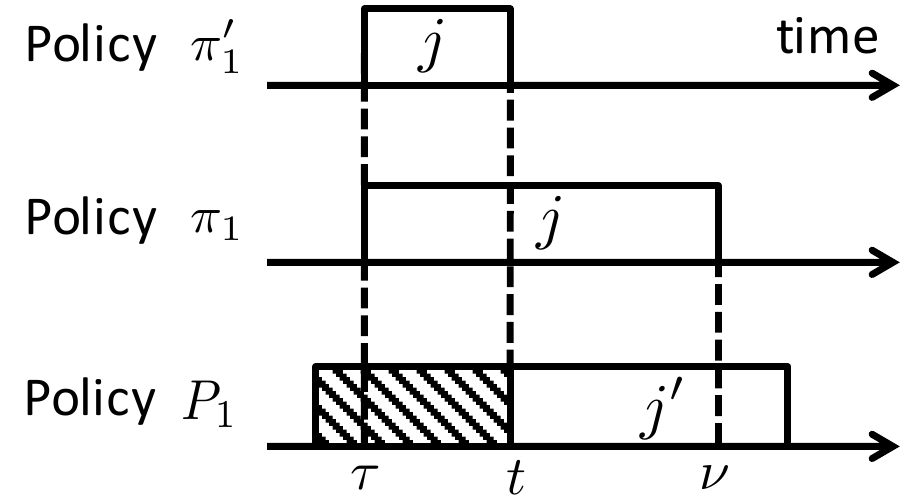} \caption{A illustration of the construction of policy $\pi'_1$. The service completion time of packet $j$ in policy $\pi_1'$ is smaller than the service completion time of packet $j$ in policy $\pi$, and is equal to the service starting time of packet $j'$. }
\label{Work_Efficiency_Ordering2} 
\end{figure}

First, when the queue is not empty in policy $P_1$, the delivery time of each packet in policy $\pi_1'$ is earlier than that in policy $\pi$. In particular, the delivery time of packet $j$ is $t$ in policy $\pi_1'$, which is earlier than $\nu$, i.e., the delivery time of packet $j$ in policy $\pi_1$. Hence, 
\begin{align}\label{eq_thm3_proof2}
\bm\Delta_{\pi_1'}(t) \leq \bm\Delta_{\pi_1}(t), ~t\in[0,\infty)
\end{align}
holds with probably one.

Second,  when the queue is not empty in policy $P_1$, the packet delivery times in policy $\pi_1'$ is synchronized with the service starting times in policy $P_1$. 
We now use this property to  show that with probability one 
\begin{align}\label{eq_thm3_proof1}
\Xi_{[i],P_1} (t) \leq \Delta_{[i],\pi_1'} (t),~i=1,\ldots,N,~t\geq 0.
\end{align}
For any given sample path of policy $P_1$ and policy $\pi_1'$, $\bm\Xi_{P_1}(0^-) = \bm\Delta_{\pi_1'}(0^-)$ at time $t=0^-$. Let us consider three cases:

\emph{Case 1:} When there is no packet delivery in policy $\pi_1'$, the age and the age of served information of each flow grows linearly with a slope 1. 

\emph{Case 2:} When there is a packet delivery in policy $\pi_1'$ and the queue is not empty in policy $P_1$, the evolution of the system state is governed by  Lemma \ref{thm3lem2}. 

\emph{Case 3:} When there is a packet delivery in policy $\pi_1'$ and the queue is empty (all packets are delivered or under service) in policy $P_1$, \eqref{eq_thm3_proof1} holds naturally as all packets have started services in policy $P_1$. 

By induction over time and considering these three cases, \eqref{eq_thm3_proof1} is proven.

Next, for any symmetric and non-decreasing   function $p_t$, it holds from \eqref{eq_thm3_proof2} and \eqref{eq_thm3_proof1} that for all sample paths and all $t\geq 0$
\begin{align}\label{eq_thm3_proof3}
&p_t\circ \bm \Xi_{P_1}(t) \nonumber\\
=& p_t(\Xi_{1,P_1} (t), \ldots, \Xi_{N,P_1} (t))\nonumber\\
=& p_t (\Xi_{[1],P_1} (t), \ldots, \Xi_{[N],P_1} (t))\nonumber\\
\leq & p_t (\Delta_{[1],\pi_1'} (t), \ldots, \Delta_{[N],\pi_1'} (t))\nonumber\\
=& p_t (\Delta_{1,\pi_1'} (t), \ldots, \Delta_{N,\pi_1'} (t))\nonumber\\
=& p_t\circ \bm \Delta_{\pi_1'}(t)\nonumber\\
\leq & p_t\circ \bm \Delta_{\pi_1}(t).
\end{align}
By Lemma \ref{thm3lem_coupling}, the state process $\{\bm\Delta_{P_1}(t),t\geq 0\}$ of policy $P_1$ has the same distribution with the state process $\{\bm\Delta_{P}(t),t\geq 0\}$ of policy $P$;
the state process $\{\bm\Delta_{\pi_1}(t),t\geq 0\}$ of policy $\pi_1$ has the same distribution with the state process $\{\bm\Delta_{\pi}(t),t\geq 0\}$  of policy $\pi$. By \eqref{eq_thm3_proof3} and  Theorem 6.B.30 in \cite{StochasticOrderBook}, \eqref{thm3eq1} holds for all  policy $\pi\in\Pi_{np}$. Finally, the equivalence between \eqref{thm3eq1} and \eqref{thm3eq2} follows from \eqref{eq_order}. This completes the proof.
\end{proof}
\fi

\end{document}